\documentclass[runningheads]{fmcad}
\usepackage{graphicx}
\usepackage{listings}
\usepackage[dvipsnames]{xcolor}
\usepackage{amsmath}
\usepackage[english]{babel}
\usepackage[misc]{ifsym}
\usepackage{subfloat}
\usepackage{float}
\usepackage{amssymb}
\usepackage{amsthm}
\usepackage{cite}

\newif\ifbadgeavailable\newif\ifbadgefunctional\newif\ifbadgereusable
\badgeavailabletrue\badgefunctionaltrue\badgereusablefalse
\usepackage{blkarray}
\usepackage{tikz}

\usetikzlibrary{external}
\usepackage[normalem]{ulem}
\usepackage[ruled,linesnumbered,noend]{algorithm2e}
\usepackage[inline]{enumitem}
\definecolor{codegreen}{rgb}{0,0.6,0}
\definecolor{codegray}{rgb}{0.5,0.5,0.5}
\definecolor{codepurple}{rgb}{0.58,0,0.82}
\definecolor{backcolour}{rgb}{0.95,0.95,0.92}

\lstdefinestyle{mystyle}{
    backgroundcolor=\color{backcolour},
    commentstyle=\color{codegreen},
    keywordstyle=\color{magenta},
    numberstyle=\tiny\color{codegray},
    stringstyle=\color{codepurple},
    basicstyle=\ttfamily\footnotesize,
    breakatwhitespace=false,
    breaklines=true,
    captionpos=b,
    keepspaces=true,
    numbers=left,
    numbersep=5pt,
    showspaces=false,
    showstringspaces=false,
    showtabs=false,
    tabsize=2
}

\usetikzlibrary{shapes, tikzmark, arrows, calc, positioning, intersections, automata, fit}
\usetikzlibrary{decorations.pathreplacing, arrows.meta}
\usetikzlibrary{decorations.pathmorphing}

\newtheorem{lemma}{Lemma}
\theoremstyle{definition}
\newtheorem{definition}{Definition}

\newtheorem{example}{Example}

\begin{document}

% \SetWatermarkText{%
% \raisebox{-4cm}% AUTHORS: Move the badge up/down; should be next to title, authors, or affiliations
% {\ifbadgeavailable\includegraphics[width=11mm]{Figures/FM_2024_AE_available}\hspace{.815\linewidth}\else\hspace{.905\linewidth}\fi%
% \ifbadgefunctional\includegraphics[width=11mm]{Figures/FM_2024_AE_functional}\fi}}

\title{Termination analysis with interpolation-based transition invariant generation}

% \titlerunning{Safety-based Termination Analysis With Transition Invariant Generation}
%
% \titlerunning{Reachability Analysis for Multiloop Programs Using TPA}
% If the paper title is too long for the running head, you can set
% an abbreviated paper title here
%
% \author{Konstantin Britikov(\Letter)\inst{1}
%  \orcidID{0009-0005-7843-7290}
% \and
% Natasha Sharygina\inst{1}
%  \orcidID{0000-0002-8872-4913}
% \and
% Grigory Fedyukovich\inst{1,2}
%  \orcidID{0000-0003-1727-4043}
% \and
% Martin Blicha\inst{1}
%  \orcidID{0000-0001-8140-4098}
% }

\author{\IEEEauthorblockN{Konstantin Britikov}
\IEEEauthorblockA{USI, Switzerland\\
% Email: britik@usi.ch
}
\and
\IEEEauthorblockN{Martin Blicha}
\IEEEauthorblockA{Argot Collective, Switzerland\\
% Email: martin.blicha@argot.org
}
\and
\IEEEauthorblockN{Grigory Fedyukovich}
\IEEEauthorblockA{FSU, United States\\
% Email: grigory.fedyukovich@gmail.com
}
\and
\IEEEauthorblockN{Natasha Sharygina}
\IEEEauthorblockA{USI, Switzerland\\
% Email: sharygin@usi.ch
}}

% \author{
% \IEEEauthorblockN{Konstantin Britikov \and
%                   Martin Blicha \and
%                   Grigory Fedyukovich \and 
%                   Natasha Sharygina}
    
% %  \IEEEauthorblockA{
% %     \IEEEauthorrefmark{1}
% %   \textit{University of Lugano}\\
% %   Lugano, Switzerland
% % }
% %  \IEEEauthorblockA{
% %   \IEEEauthorrefmark{2}
% %   \textit{Argot Collective}\\
% %   Zug, Switzerland
% % }
% % \IEEEauthorblockA{
% %   \IEEEauthorrefmark{3}
% %   \textit{Florida State University}\\
% %   Tallahassee, United States
% % }

% }

% \institute{}
% \authorrunning{Britikov K., Blicha M., Fedyukovich G., Sharygina N.}
% First names are abbreviated in the running head.
% If there are more than two authors, 'et al.' is used.
%
% \institute{University of Lugano, Lugano, Switzerland \and
% Florida State University, Tallahassee, FL, United States
% \email{lncs@springer.com}\\
 % \url{http://www.springer.com/gp/computer-science/lncs}

%
\maketitle              % typeset the header of the contribution

\begin{abstract}
Termination and nontermination of infinite-state systems are complementary problems that, 
despite their close connection, are typically addressed by separate techniques.
The core idea of this paper is to connect termination and nontermination analysis,
enabling the two to share intermediate results and guide one another.
We present a new termination analysis approach based on the
generation of well-founded transition invariants. 
It leverages Craig interpolation for transition invariant 
generation, capturing the structural reasons for termination.
The proposed technique extends safety-based nontermination analysis, enabling it to prove both 
termination and nontermination within a unified framework.
We implemented our approach in the \textsc{Golem} verification framework and evaluated it on benchmarks from the 
Termination Competition (TermComp). 
Empirical results demonstrate that combining termination and nontermination is beneficial and 
yields performance comparable to state-of-the-art tools.
\end{abstract}

%

% \newcommand{\oreach}[1]{\ensuremath{\mathit{TPA}^{{\leq}#1}}}
% \newcommand{\oreachfull}[3]{\ensuremath{\oreach{#1}(#2,#3)}}
% \newcommand{\atrleq}[1]{\ensuremath{\mathit{TPA}^{{\leq}#1}}}
% \newcommand{\atreq}[1]{\ensuremath{\mathit{TPA}^{{=}#1}}}
% \newcommand{\atrlt}[1]{\ensuremath{\mathit{TPA}^{{<}#1}}}

% \newcommand{\trleq}[1]{\ensuremath{\mathit{R}^{{\leq}#1}}}
% \newcommand{\treq}[1]{\ensuremath{\mathit{R}^{{=}#1}}}
% \newcommand{\trlt}[1]{\ensuremath{\mathit{R}^{{<}#1}}}

% PRELIMINARIES  VARIABLES
\newcommand{\source}{\mathit{src}}
\newcommand{\target}{\mathit{trg}}
\newcommand{\constraint}{\mathit{constr}}
\newcommand{\clause}{\mathit{c}}
\newcommand{\clauses}{C}
\newcommand{\pred}{\mathit{p}}
\newcommand{\preds}{\mathcal{P}}
\newcommand{\predsf}{\mathit{preds}}
\newcommand{\system}{\mathit{S}}
\newcommand{\encoding}{\pi}
\newcommand{\node}{n}
\newcommand{\init}{\mathit{Init}}
\newcommand{\Init}{\mathit{Init}}
\newcommand{\route}{{{{h}}}}
\newcommand{\routes}{\mathit{H}}
\newcommand{\iteration}{{{\gamma}}}
\newcommand{\proute}{{{\eta}}}
\newcommand{\proutes}{\Theta}
\newcommand{\query}{\mathit{Target}}
\newcommand{\incoming}{\mathit{\alpha}}
\newcommand{\outgoing}{\mathit{\omega}}
\newcommand{\vars}{\mathcal{V}}
\newcommand{\TS}{\mathit{TS}}
\newcommand{\NT}{\mathit{NT}}
\newcommand{\tr}{\mathit{Tr}}
\newcommand{\Tr}{\mathit{Tr}}
\newcommand{\inv}{\mathit{Inv}}
\newcommand{\Inv}{\mathit{Inv}}
\newcommand{\trInv}{\mathit{TrInv}}
\newcommand{\TrInv}{\mathit{TrInv}}
\newcommand{\mop}{\mathit{MOP}}
\newcommand{\tpa}{\textsc{TPA}\xspace}
\newcommand{\res}{\mathit{res}}
\newcommand{\Cs}{\mathit{Cs}}
\newcommand{\Prs}{\mathit{Prs}}
\newcommand{\flas}{\mathit{Flas}}
\newcommand{\id}{\mathit{Id}}
\newcommand{\bad}{\mathit{Bad}}
\newcommand{\Sink}{\mathit{Sink}}
\newcommand{\varsx}{\ensuremath{\vec{x}}}

\newcommand{\tuple}[1]{\langle #1 \rangle}

\newcommand{\unrollings}{\mathcal{U}}
\newcommand{\summaries}{\ensuremath{\Sigma}}
\newcommand{\summary}{\ensuremath{\sigma}}

\newcommand{\false}{\mathit{false}}
\newcommand{\true}{\mathit{true}}

\newcommand{\QE}{\textsc{QE}}

\newcommand{\SAFE}{\textsc{SAFE}}
\newcommand{\UNSAFE}{\textsc{UNSAFE}}
\newcommand{\NONTERM}{\textsc{NONTERM}\xspace}
\newcommand{\TERM}{\textsc{TERM}\xspace}
\newcommand{\UNKNOWN}{\textsc{UNKNOWN}\xspace}
\newcommand{\SAT}{\textsc{SAT}}
\newcommand{\UNSAT}{\textsc{UNSAT}}

\newcommand{\DAG}{D}
\newcommand{\curr}{\mathit{curr}}

\newcommand{\gri}[1]{$[$\textbf{\textcolor{BrickRed}{Grigory}}:~~\emph{\textcolor{Mahogany}{#1}}$]$}
\newcommand{\martin}[1]{$[$\textbf{\textcolor{BrickRed}{Martin}}:~~\emph{\textcolor{blue}{#1}}$]$}
\newcommand{\konst}[1]{$[$\textbf{\textcolor{ForestGreen}{Konstantin}}:~~\emph{\textcolor{blue}{#1}}$]$}

% ALGORITHM

\newcommand{\cexp}{\ensuremath{\mathit{cexp}}}

\newcommand{\analyze}{\textsc{Analyze}\xspace}
\newcommand{\blockTerm}{\textsc{BlockTerm}\xspace}
\newcommand{\trInvSynthesis}{\textsc{TrInvGen}\xspace}
\newcommand{\safetyCheck}{\textsc{SafetyCheck}\xspace}
\newcommand{\toDNF}{\textsc{ToDNF}\xspace}

\newcommand{\inc}[1]{\ensuremath{{#1}^{\mathit{pre}}}}
\newcommand{\out}[1]{\ensuremath{{#1}^{\mathit{post}}}}
\newcommand{\checked}{\mathit{curr}}
\newcommand{\safe}[1]{\ensuremath{#1.\mathit{sat}}}
\newcommand{\mbp}{\ensuremath{\mathit{MBP}}\xspace}
\newcommand{\splittpa}{\textsc{TPA}}
\newcommand{\reached}{\ensuremath{\mathit{visited}}}
\newcommand{\blocked}{\ensuremath{\mathit{blocked}}}
\newcommand{\current}{\ensuremath{\mathit{curr}}}
\newcommand{\outp}{\ensuremath{\mathit{output}}}
\newcommand{\cex}{\ensuremath{\mathit{cex}}}
\newcommand{\iter}{\ensuremath{\mathit{depth}}}
\newcommand{\iterConst}{\ensuremath{\mathit{iterConst}}}
\newcommand{\iterReached}{\ensuremath{\mathit{iterVisited}}}
\newcommand{\ex}{\ensuremath{\mathit{e}}}

\newcommand{\pop}{\ensuremath{\mathit{pop}}}
\newcommand{\learnedConst}{\mathit{constr}}
\newcommand{\precond}{\mathit{pre}}
\newcommand{\postcond}{\mathit{post}}
\newcommand{\sourcepred}{\mathit{s}}
\newcommand{\targetpred}{\mathit{t}}

\newcommand{\model}{M}
\newcommand{\algorithmicbreak}{\textbf{break}}

\newcommand{\sat}{\textsc{SAT}}
\newcommand{\unsat}{\textsc{UNSAT}}

\section{Introduction}
\label{Sec:Introduction}

Termination analysis of infinite-state systems is a fundamental problem in formal methods. 
% that necessitates the discovery of 
% complex ranking functions or inductive invariants to over-approximate system behavior.
% Konst: It not necessarily necessitates the discovery of complex ranking functions
Proving termination and nontermination is undecidable~\cite{DBLP:conf/cav/Braverman06}, 
yet there exist many techniques that succeed on a large class of real-world problems.
Modern techniques treat termination and nontermination as separate 
problems, since each has a different proof procedure.
%State-of-the-art termination techniques are based on the synthesis of 
%ranking functions for the transition relation
The existence of a ranking function by e.g.~\cite{DBLP:conf/vmcai/PodelskiR04,DBLP:conf/cav/KuraUH20,DBLP:journals/jacm/Ben-AmramG14,DBLP:conf/cav/HeizmannHP14,DBLP:journals/jar/GieslABEFFHOPSS17,DBLP:conf/esop/SaritaSGSV26,DBLP:conf/sas/Ben-AmramDG19,DBLP:conf/tacas/UrbanGK16,FedyukovichZG18,RileyF25} 
witnesses termination by providing a measure that strictly decreases along every transition.
Alternatively, fixpoint computation, e.g.~\cite{DBLP:journals/pacmpl/UnnoTGK23}, and disjunctively well-founded 
transition invariants~\cite{Transition-Invariants-Revisited-CAV26,DBLP:conf/pldi/CookPR06,DBLP:conf/cav/KroeningSTW10,DBLP:conf/tacas/TsitovichSWK11} have been proposed for termination analysis.
Nontermination techniques are predominantly based on the detection of recurrent 
sets — sets of states from which execution cannot escape~\cite{DBLP:conf/sas/Ben-AmramDG19,DBLP:conf/cade/FrohnG22,DBLP:conf/tacas/BrockschmidtCIK16,DBLP:conf/foveoos/BrockschmidtSOG11,DBLP:conf/cav/LarrazNORR14,FedyukovichZG18}.
There also exist alternative approaches based on Hoare-style reasoning~\cite{DBLP:conf/pldi/LeQC15}, 
safety proving~\cite{DBLP:conf/tacas/ChenCFNO14,DBLP:journals/entcs/BiereAS02}, representing infinite 
runs as sums of geometric series~\cite{DBLP:conf/tacas/LeikeH18},
or synthesis of \emph{negative} exact loop bounds~\cite{RileyF25}.

In this work, we present a unified framework for proving both termination and nontermination.
This framework extends the safety-based nontermination 
analysis~\cite{DBLP:conf/tacas/ChenCFNO14} (SNA), enabling it to prove termination.
The extension is a novel interpolation-based method for generating
disjunctively well-founded transition invariants for termination analysis.
The proposed approach uses terminating transition traces constructed by SNA, overapproximating them with 
a Craig interpolation~\cite{Craig_1957}.
The interpolant is processed into a transition invariant candidate, which is used to prove termination.
Additionally, based on the transition invariant candidate, the algorithm creates smaller, focused
(non)termination problems, which guide the analysis of the whole transition system.
Solving these problems helps to extend the transition invariant candidate and
enables an efficient nontermination analysis of complex problems.

We implemented our new technique in the \textsc{Golem} Constrained Horn Clause solver~\cite{DBLP:conf/cav/BlichaBS23}. 
We evaluated it on the Integer Transition Systems benchmarks from the Termination Competition~\cite{DBLP:conf/cade/GieslMRTW15}. 
Our tool solved a large number of terminating instances and demonstrated significant improvement for 
nontermination analysis, compared to the classical SNA.
The comparison with the winners of the Termination Competition~2025\footnote{\url{https://termination-portal.org/wiki/Termination\_Competition\_2025}.}---\textsc{KoAT}, \textsc{LoAT}~\cite{10.1007/978-3-031-90660-2_13}, and \textsc{T2}~\cite{DBLP:conf/tacas/BrockschmidtCIK16}---
shows that our approach is competitive with the state-of-the-art tools and can solve unique instances, including those
that were never solved by any tool before.

Overall, the contributions of this paper are as follows: (1) a novel interpolation-based method for the generation of 
disjunctively well-founded transition invariants, (2) a unified termination/nontermination analysis procedure that utilises 
the intermediate results of (non)termination analysis for self-guidance and (3) its 
implementation and successful evaluation.

% ZCBMEuOLDC0vfaFBU2k9mGuhR9_jp6my075v0pJmoqU

%

\section{Preliminaries}
\label{Sec:Preliminaries}
This section introduces the definitions and formalisms used in the paper. 
Logical formulas in our approach are defined over the theory of Linear Integer Arithmetic (LIA).
We use $X,X'$ to denote sets of variables (unprimed and primed), where $X$ is a set of state variables, and $X'= \{x'| x \in X\}$ is a set of next state variables.
If a formula contains variables over multiple states, we use superscript notation, e.g., $X^{(0)}, \dots, X^{(n)}$.

The formula defined over $X$ is a state formula, and a formula over $X \cup X'$ is a transition formula.
State formulas encode a set of states, and transition formulas define binary relations over the states.
We use $Id(X,X')$ to denote the identity transition formula $\bigwedge_{x \in X} x = x’$.

%\textit{Transition systems and Safety Problem:} 
A \emph{transition system} is defined as $\TS = \tuple{ \Init, \Tr, X }$, where $\Init$ is a formula 
encoding the initial states of the system, $\Tr$ is a transition formula, encoding the transition 
relation, and $X$ is the set of system variables.
With respect to transition system, a \emph{transition trace} is a formula encoding multiple consecutive 
transitions: $\Tr^{n}(X^{(0)}, \dots, X^{(n)}) = \Tr(X^{(0)}, X^{(1)}) \land \dots \land \Tr(X^{(n-1)}, X^{(n)})$.
%We write $\Tr^{n}(X^{(0)},\dots,X^{(n)})$ for this conjunction throughout the paper.
A \emph{safety problem} is $P = \tuple{ \Init, \Tr, Bad, X }$, where $Bad$ is a formula representing 
states that violate the safety property.

Safety verification amounts to determining whether there exists a transition trace that reaches a $Bad$ state from some initial state.
If $\TS$ is safe, then there exists an inductive invariant $Inv(X)$ such that (1) $\forall X. \Init(X) \rightarrow Inv(X)$, 
(2) $\forall X, X'. Inv(X) \land \Tr(X,X') \rightarrow Inv(X')$, and (3) $\forall X. Inv(X) \rightarrow \lnot Bad(X)$.
We use \safetyCheck$(P)$ to denote a safety verification procedure that returns a tuple $\tuple{\SAFE, Inv(X)}$ if 
$\TS$ is safe ($Inv(X)$ is an inductive invariant).
% Otherwise, if \TS is unsafe it returns $\tuple{\UNSAFE, \Tr^{n}(X^{(0)}, \dots, X^{(n)})}$, such that: 
If $\TS$ is unsafe it returns $\tuple{\UNSAFE, \Tr^{n}(X^{(0)}, \dots, X^{(n)})}$, s.t. $\exists X^{(0)}, \dots, X^{(n)}$\footnote{It is a shorthand notation for $\exists  X^{(0)}, \dots, \exists X^{(n)}$}.

$$\Init(X^{(0)}) \land \Tr^{n}(X^{(0)}, \dots, X^{(n)}) \land Bad(X^{(n)})$$

% \textit{Nontermination:}
A transition system is called \textbf{nonterminating} if there exists an initial state from which it is possible to take 
any number of transitions, meaning $\exists X. \Init(X)$ such that:
$$\forall n \in \mathbb{N}.  \exists X^{(1)}, \dots, X^{(n)}. \Tr^{n}(X, X^{(1)}, \dots, X^{(n)})$$

\begin{definition}[Recurrent Set]
    For a Transition System $\tuple{ \Init, \Tr, X }$, a formula $N(X)$ is a recurrent set if: 
    (1) $\forall X. N(X) \rightarrow \exists X'. \Tr(X,X') \land N(X')$,  
    (2) $\exists X. N(X) \land \Init(X)$.
\end{definition}

Transition System is nonterminating if there exists a \emph{recurrent set}. 
Intuitively, every state in a recurrent set has a successor that also lies in the recurrent set, and the recurrent set contains some of the 
initial states.

% \textit{Termination:} 
A $\TS$ is called \textbf{terminating} if for every initial state it enables only a finite number of transitions.
% For a Transition System $\tuple{ \Init, \Tr, X }$:
Transition System $\tuple{ \Init, \Tr, X }$ terminates, if $\forall X. \Init(X)$ the following holds:
% $$\forall X. \Init(X) \rightarrow \exists n \in \mathbb{N}. \lnot \exists \dots, X^{(n)}. \Tr^{n}(X, X^{(1)}, \dots, X^{(n)})$$
$$\exists n \in \mathbb{N}. \lnot \exists X^{(1)}, \dots, X^{(n)}. \Tr^{n}(X, X^{(1)}, \dots, X^{(n)})$$
% Termination can be proven by synthesizing a \emph{ranking function}~\cite{DBLP:conf/vmcai/PodelskiR04}, mapping states to a well-founded domain, such that its value strictly decreases along every transition.

\begin{definition}[Transition invariant]
    For a Transition System $\tuple{ \Init, \Tr, X }$, $\TrInv(X,X')$ is a transition invariant if: $\forall X,X'. R(X) \land \Tr^+(X,X') \rightarrow \TrInv(X,X')$, where $\Tr^+(X,X')$ is the transitive closure of $\Tr(X,X')$
    and $R(X)$ is a formula encoding all states reachable in $\TS$.
\end{definition}

\begin{definition}[Well-founded Relation]
    A binary relation $W \subseteq D \times D$ over a domain $D$ is \emph{well-founded} if there exists no infinite descending chain:
    $$\nexists\, d_0, d_1, d_2, \dots \in D.\ 
      \forall i \in \mathbb{N}.\ (d_i, d_{i+1}) \in W$$
\end{definition}

A relation $W(X,X')$ can be proven to be well-founded by synthesizing a ranking function $f(X)$, such that
 $\forall X,X'. W(X,X') \rightarrow f(X) > f(X')$~\cite{DBLP:conf/vmcai/PodelskiR04}.

Termination can be proven using transition invariants~\cite{DBLP:conf/lics/PodelskiR04}.
If a transition invariant is a finite union of well-founded relations (for example, if each relation admits a ranking function), 
then the transition system terminates.
\begin{definition}[Disjunctively well-founded relation]
    A relation $K(X,X') = \bigcup_{i = 0}^{n} W_i(X,X')$ is called disjunctively well-founded if every relation $W_i(X,X')$ is well-founded. 
\end{definition}

% \textit{Craig Interpolation:} 
Given two formulas $A$ and $B$ where $A \land B$ is unsatisfiable, a Craig interpolant~\cite{Craig_1957} 
for $A$ and $B$ is a formula $I$, where: (1) $A \rightarrow I$; 
(2) $I \land B$ is unsatisfiable; and (3) $I$ only contains the symbols, common to both $A$ and $B$.
Craig interpolants are widely used as overapproximations.
We write $\textsc{Itp}(A, B)$ for the computation of a Craig interpolant.

\subsection{Proving Nontermination via Safety}
Many techniques reduce termination and nontermination analysis to safety verification.
One notable approach is the reduction of nontermination to safety, which constructs
recurrent sets via inductive invariants~\cite{DBLP:conf/tacas/ChenCFNO14}.
It was introduced in the context of proving nontermination of programs with loops, and can be extended to transition systems.

\begin{definition}[Sink states]
    For a Transition System $\tuple{\Init, \Tr, X }$, a state $s$ is a sink state, if $\lnot \exists s'. \Tr(s,s')$.
    We use $\Sink(X)$ to denote the formula encoding all sink states in the transition system:
    $\forall X. \Sink(X) \leftrightarrow \lnot \exists X'. \Tr(X,X')$
\end{definition}

\begin{figure}[t]
    \centering
    \includegraphics[width=\linewidth]{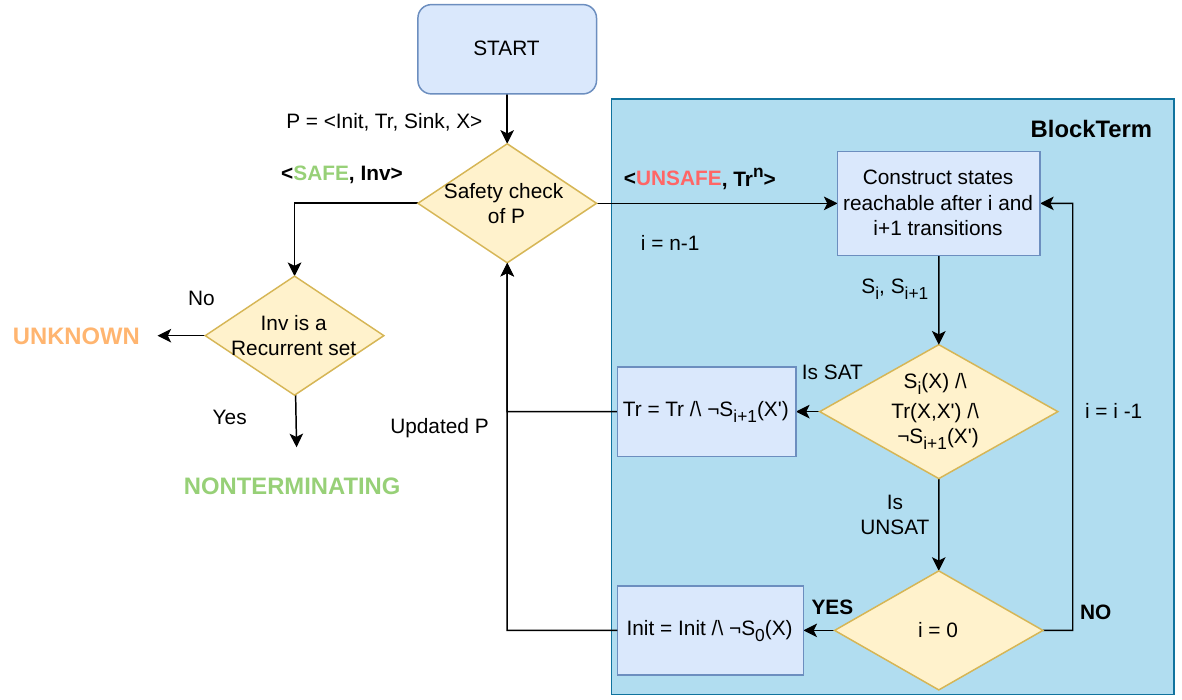}
    \caption{Execution flow of Safety-based Nontermination analysis, cf.~\cite{DBLP:conf/tacas/ChenCFNO14}}
    \label{fig:SNA}
\end{figure}

Algorithm~\ref{Alg:SNA} for Safety-based Nontermination analysis (SNA) takes a 
safety problem $P = \tuple{\Init, \Tr, \Sink, X}$, where sink states are constructed from $\TS$ using
quantifier elimination (\QE): $\Sink(X) = \lnot \QE(\exists X'. \Tr(X,X'))$.
As an output, it returns \NONTERM{} if the system is nonterminating, and \UNKNOWN, if the algorithm cannot conclude the analysis. 
Figure~\ref{fig:SNA} illustrates the execution flow of Safety-based Nontermination.

\begin{algorithm}[t!]
	\SetInd{0.5em}{0.5em}
    \DontPrintSemicolon
    \small
    \newcommand{\Continue}{\textbf{continue}}
    \caption{Safety-based Nontermination Analysis (SNA), cf.~\cite{DBLP:conf/tacas/ChenCFNO14}}
    \label{Alg:SNA}
    \SetKwInOut{Input}{Input}
    \SetKwInOut{Output}{Output}
    \Input{Safety problem $P = \tuple{\Init, \Tr, \Sink, X}$}
    \Output{\NONTERM $\mid$ \UNKNOWN}
    % \SetKwFunction{FAnalyze}{Analyze}
    % \SetKwProg{Fn}{Function}{:}{}
    % \nonl\Fn{\FAnalyze{$P$}}{
        \While{$\top$}{
            $\tuple{\res,\phi = \Tr^{n}\ or\, \Inv} \gets \safetyCheck(P)$\\
            \eIf(\label{SNA:UNSAFE}){$\res = \UNSAFE$} {%
                $P \gets \blockTerm(P, \phi)$ \label{SNA:BLOCKTERM}
            }
            {
                \If{$\forall X. \phi(X) \rightarrow \exists X'. \Tr(X,X')$\label{SNA:SAFE}}{
                    \Return \NONTERM \label{SNA:RECURRENT}
                }
                \Return \UNKNOWN
            }
            
        }
    % }
    
\end{algorithm}
% A

\iffalse
The approach executes a safety check of $P$ in a loop.
At line~\ref{SNA:UNSAFE}, the algorithm handles the case when $P$ is unsafe (i.e., there exists a trace reaching sink states).
In this case, the algorithm calls the $\blockTerm$ procedure (Algorithm~\ref{Alg:SNABlock}), which 
refines the safety problem $P$ by blocking the states that deterministically
lead to the sink states, updating the safety problem.
\fi
SNA iteratively solves increasingly refined safety problems until the given sink states are proven unreachable.
In every iteration, if there exists a trace reaching sink states (line~\ref{SNA:UNSAFE}), $\blockTerm$ (Algorithm~\ref{Alg:SNABlock}) refines the safety problem $P$ by blocking the states that deterministically
lead to the sink states, updating the safety problem.
$\blockTerm$ is described in detail below.
%When $P$ is safe, the algorithm checks if the inductive invariant $Inv(X)$ is a recurrent set (line~\ref{SNA:SAFE}).
When the original sink states are no longer reachable, SNA checks if the computed inductive invariant $\inv(X)$ is a recurrent set (line~\ref{SNA:SAFE}).
%The check at line~\ref{SNA:SAFE} is executed using quantifier elimination, checking if the following formula is unsatisfiable:
This amounts to deciding satisfiability of the following formula:
$$\inv(X) \land \lnot \exists X'. \Tr(X,X').$$
If $\inv(X)$ is a recurrent set, then the transition system is nonterminating.
Otherwise, SNA returns \UNKNOWN.

\begin{algorithm}[t!]
    \SetInd{0.5em}{0.5em}
    \DontPrintSemicolon
    \small
    \newcommand{\Continue}{\textbf{continue}}
    \caption{\blockTerm(P, $\Tr^{n}$), cf.~\cite{DBLP:conf/tacas/ChenCFNO14}}
    \label{Alg:SNABlock}
    \SetKwInOut{Input}{Input}
    \SetKwInOut{Output}{Output}
    \SetKwInOut{Data}{Data}
    \Input{Safety problem $P = \tuple{\Init, \Tr, \Sink, X}$, terminating trace $\Tr^{n}(X^{(0)}, \dots, X^{(n)}) = \Tr(X^{(0)}, X^{(1)}) \land \dots \land \Tr(X^{(n-1)}, X^{(n)})$}
    \Output{Refined safety problem $P'$}
    \Data{$S_0,\dots,S_n$ - formulas which encode states within trace reachable in 0, ..., n transitions}

    % \SetKwFunction{FBlock}{BlockTerm}
    % \SetKwProg{Fn}{Function}{:}{}
    % \Fn{\FBlock{$P, trace$}}{
    \SetAlgoFuncName{BlockTerm}{autoref name}
        $i \gets n-1$ \\
        \While{$i > 0$}{
            $S_{i+1}(X) \gets \QE(\exists X^{(0)},\dots, X^{(i)},  X^{(i+2)}, \dots, X^{(n)}.$ $\Init(X^{(0)}) \land \Tr^{n} \land \Sink(X^{(n)}))[X^{(i+1)} \mapsto X]$ \label{SNA:Si1}\\
            $S_{i}(X) \gets \QE(\exists X^{(0)},\dots, X^{(i-1)},  X^{(i+1)}, \dots, X^{(n)}.$ $\Init(X^{(0)}) \land \Tr^{n} \land \Sink(X^{(n)}))[X^{(i)} \mapsto X]$ \label{SNA:Si}\\
            \If{SAT ? $S_i(X) \land \Tr(X,X') \land \lnot S_{i+1}(X')$ \label{SNA:BLOCK}}{
                $\Tr'(X,X') \gets \Tr(X,X') \land \lnot S_{i+1}(X')$ \label{SNA:TRBLOCK}  \\
                \Return $\tuple{\Init, \Tr', \Sink, X}$
            }
            $i \gets i - 1$   
        }
        $\Init'(X^{(0)}) \gets \Init(X^{(0)}) \land \lnot \QE(\exists X^{(1)}, \dots, X^{(n)}. \Init(X^{(0)}) \land \Tr^{n} \land \Sink(X^{(n)}))$  \label{SNA:BLOCKINIT} \\
        \Return $\tuple{\Init', \Tr, \Sink, X}$
    % }
\end{algorithm}

The $\blockTerm$ procedure takes as input a safety problem $P$ and a trace formula $\Tr^{n}(X^{(0)}, \dots, X^{(n)})$ such 
that $\Init(X^{(0)}) \land \Tr^{n}(X^{(0)}, \dots, X^{(n)}) \land \Sink(X^{(n)})$ is satisfiable.
The algorithm iteratively analyses the trace from the last transition backwards. 
% It checks if it is possible to take a transition that does not lead to the sink states, starting from the end of the trace.
Using \QE, Algorithm~\ref{Alg:SNABlock} computes $S_i$ and $S_{i+1}$, 
encoding the states that are reachable after $i$ and $i+1$ transitions from $\Init$ (lines~\ref{SNA:Si1},~\ref{SNA:Si}).
Then, at line~\ref{SNA:BLOCK} it checks if $S_i(X) \land \Tr(X,X') \land \lnot S_{i+1}(X')$ is satisfiable.%, sending a query to the SMT solver.
%With this query, the algorithm checks if it is possible to reach some state that does not necessarily lead to the sink states.
Intuitively, with this query, the algorithm checks if it is possible to take an alternative transition at step $i$, one that does not lead to the sink states.
If it is possible, the algorithm blocks the states deterministically leading to the sink states by excluding them from the transition relation (line~\ref{SNA:TRBLOCK}).
If the whole trace is deterministic (it is not possible to take an alternative transition along the trace), \blockTerm blocks the subset of initial states deterministically leading to sink (line~\ref{SNA:BLOCKINIT}).

\begin{lemma}\label{Lem:NONTERM}
    If Algorithm~\ref{Alg:SNA} returns \NONTERM, then the transition system is nonterminating, cf.~\cite{DBLP:conf/tacas/ChenCFNO14}.
\end{lemma}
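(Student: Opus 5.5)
We argue that when Algorithm~\ref{Alg:SNA} returns \NONTERM, the final invariant $\Inv$ is a recurrent set for the \emph{original} transition system $\tuple{\Init_0,\Tr_0,X}$. Nontermination then follows from the remark after the definition of recurrent sets.

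Let $P_0=\tuple{\Init_0,\Tr_0,\Sink,X}$ be the input and $P_k=\tuple{\Init_k,\Tr_k,\Sink,X}$ the problem on which \safetyCheck{} returns $\tuple{\SAFE,\Inv}$.

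The first step is a monotonicity invariant over the loop. Each \blockTerm{} call either conjoins a constraint to $\Tr$ (line~\ref{SNA:TRBLOCK}) or to $\Init$ (line~\ref{SNA:BLOCKINIT}). By induction on the number of iterations,
\[
\forall X,X'.\ \Tr_k(X,X')\rightarrow \Tr_0(X,X')
\qquad\text{and}\qquad
\forall X.\ \Init_k(X)\rightarrow\Init_0(X).
\]
The sink formula is never modified and always refers to $\Tr_0$. This matches the test at line~\ref{SNA:SAFE}, provided $\Tr$ there is read as the original relation; I would make that reading explicit.

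The second step is condition (1) of a recurrent set. Take $X$ with $\Inv(X)$. The check at line~\ref{SNA:SAFE} only yields some $X'$ with $\Tr_0(X,X')$; it does not guarantee $\Inv(X')$. So we need more. Since $\Inv\rightarrow\lnot\Sink$, the state $X$ has a $\Tr_0$-successor, but it may have no $\Tr_k$-successor. This is the main obstacle.

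I would handle it by showing that blocking preserves the existence of a successor within the refined relation. At line~\ref{SNA:TRBLOCK}, the refinement is applied only after line~\ref{SNA:BLOCK} has found an alternative transition. I would strengthen the induction to maintain an invariant stating that every state reachable in $P_k$ either has a $\Tr_k$-successor or satisfies $\Sink$. Concretely, $\Tr_k$ removes only transitions into states that themselves lead deterministically to $\Sink$ or to already-removed states.

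If this invariant is too hard to maintain, I would fall back to a different witness. Take the set $N$ of states reachable in $P_k$, so $N\rightarrow\Inv$. The states of $N$ that get stuck under $\Tr_k$ are exactly those whose every $\Tr_0$-successor was blocked. By the construction of $S_{i+1}$, such successors lie on deterministic sink-paths, so the stuck state itself would have been blocked. This gives a contradiction, and $N$ is closed under some $\Tr_k\subseteq\Tr_0$-successor.

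Once condition (1) holds under $\Tr_k$, inductiveness transfers it to $\Tr_0$. From $\Inv(X)\wedge\Tr_k(X,X')$ we get $\Inv(X')$, and we get $\Tr_0(X,X')$ by the first step.

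The third step is condition (2). We need $\Init_k$ to be satisfiable; otherwise \safetyCheck{} trivially returns \SAFE{} and nothing can be concluded. I would note that the intended algorithm returns \NONTERM{} only when some initial state remains, or add this guard. Given a state with $\Init_k(X)$, invariant condition (1) gives $\Inv(X)$, and the first step gives $\Init_0(X)$. Hence $\exists X.\ \Inv(X)\wedge\Init_0(X)$, so $\Inv$, or $N$ in the fallback, is a recurrent set for the original system.
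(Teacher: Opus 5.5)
Your proposal has a genuine gap in the second step. It comes from reading $\Tr$ at line~\ref{SNA:SAFE} as the original $\Tr_0$. Under that reading the check is vacuous: $\Inv\rightarrow\lnot\Sink$, and $\Sink$ is defined from $\Tr_0$, so every $\Inv$-state automatically has a $\Tr_0$-successor, and Algorithm~\ref{Alg:SNA} could never return \UNKNOWN. The lemma is then false, so neither your strengthened invariant nor your fallback can be established.

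The reason is that \blockTerm{} removes all transitions into $S_{i+1}$ as soon as \emph{some} state of $S_i$ has an alternative successor (line~\ref{SNA:BLOCK}). Any other predecessor whose successors all lie in $S_{i+1}$ becomes stuck, and \blockTerm{} does not block it. Take the terminating variant of Example~\ref{Ex:SNA}:
\begin{itemize}
\item Once $\Init$ has been refined to $x>2$, a length-2 trace gives $S_1\equiv 1\le x\le 2$ and $S_2\equiv -1\le x\le 0$.
\item The check at line~\ref{SNA:BLOCK} succeeds via $2\to 1$, so $\Tr$ becomes $\Tr\land\lnot S_2(X')$, which is effectively $\Tr\land x'>0$.
\item Now $x=1$ is reachable ($3\to 1$), is not a sink, and has no successor.
\item $\Sink$ is unreachable, so \safetyCheck{} returns \SAFE{} with, e.g., $\Inv\equiv x>0$. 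This $\Inv$ passes the $\Tr_0$-check, although the system terminates.
\end{itemize}
Your reachable-set witness $N$ also contains the stuck state $x=1$, so the claimed contradiction does not arise.

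The missing idea is that the check at line~\ref{SNA:SAFE} is taken with respect to the current, refined relation $\Tr_k$. Its whole purpose is to detect states left stuck by blocking. This is also how the paper reasons. It defers to Chen et al.\ for this lemma, and in Lemma~\ref{Lem:FinalNontermination} it treats $\phi$ as a recurrent set of the currently considered system $\tuple{\Init',\Tr',X}$, transferring it via $\Tr'\rightarrow\Tr$.

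With this reading, condition (1) is immediate and needs no invariant about \blockTerm{}. The check yields $X'$ with $\Tr_k(X,X')$, inductiveness of $\Inv$ under $\Tr_k$ gives $\Inv(X')$, and your first step gives $\Tr_0(X,X')$; this is exactly the argument in your last paragraph of step two. Your monotonicity step is correct and needed. So is your observation that $\Init_k$ must be satisfiable: if $\Init_k$ is empty, $\Inv=\false$ passes any check, and the algorithm as written does not guard against this.
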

\begin{example}\label{Ex:SNA}
    Consider a transition system $\TS = \tuple{\Init, \Tr, X}$ with 
    $X = \{x\}$, $\Init(X)   \triangleq \top$, and
    $\Tr(X, X') \triangleq (x' = x - 1 \lor x' = x - 2) \land x \neq 0$.
    The sink states are $\Sink(X) \triangleq x = 0$, and the initial 
    safety problem is $P = \tuple{\Init, \Tr, \Sink, x}$.
    This $\TS$ is nonterminating, since $x$ infinitely decreases.

    The execution of Algorithm~\ref{Alg:SNA} updates the initial states
    and the transition relation in two iterations of the algorithm, via \blockTerm: 
    $\Init(X) \;\gets\; x \neq 0$, and  $\Tr(X, X') \;\gets\; \Tr(X, X') \land x' \neq 0$.
    The safety check of the updated problem concludes \SAFE,
    with $Inv(X) \triangleq x \neq 0$. 
    $Inv(X)$ is a recurrent set, and algorithm returns \NONTERM.

    \textit{Motivation for transition invariant-based termination analysis.} Consider a 
    small modification (replacing $x\neq0$ with $x>0$): $\Tr(X, X') \triangleq 
    (x' = x - 1 \lor x' = x - 2) \land x > 0$. 
    This renders $\TS$ terminating since $x$ is strictly positive and decreases every step. 
    However, Algorithm~\ref{Alg:SNA} cannot detect termination by design, and 
    it would return \UNKNOWN. 

    Let us examine the execution of Algorithm~\ref{Alg:SNA} on this example.
    First, using \blockTerm, it blocks initial states that instantly terminate:
    $\Init(X) \;\gets\; x > 0$.
    In the second iteration, \safetyCheck$(P)$ returns \UNSAFE{} with a 
    trace $\Tr^1(X^{(0)}, X^{(1)})$ satisfying $\exists X^{(0)}, X^{(1)}. 
    \Init(X^{(0)}) \land \Tr^1(X^{(0)}, X^{(1)}) \land \Sink(X^{(1)})$.
    Our key insight is to generalise such terminating traces into a formula describing an 
    entire class of terminating behaviours. 
    If this generalisation yields a disjunctively well-founded 
    transition invariant over all reachable states, it witnesses 
    termination.
    Particularly, $\TrInv(X, X') \triangleq x' < x \land x > 0$ is a well-founded transition 
    invariant that proves the termination of $\TS$.
    Section~\ref{Sec:Main} presents the algorithm that automates this generalisation via 
    Craig interpolation.
\end{example}

\section{Termination Analysis via Interpolation-based Transition Invariant Generation}\label{Sec:Main}

We present a new termination analysis approach that leverages safety verification to generate a 
well-founded transition invariant via interpolation.
At each iteration, our procedure constructs a trace from the initial states to a sink state.
A candidate well-founded transition invariant is then constructed based on the overapproximation of the terminating trace.
If the candidate is valid for all reachable states (meaning it is a transition invariant), termination is concluded. 
The interpolation-based approach enables efficient generation of a transition invariant guided by the termination conditions.

\subsection{Interpolation-based Generation of Well-Founded Transition Invariants}

The core of our approach is \trInvSynthesis, a procedure for the automated 
interpolation-based generation of well-founded transition invariants.
Given a transition system and a terminating trace, % procedure 
it constructs a disjunctively well-founded approximation of the trace. 
If the resulting formula constitutes a transition invariant over all reachable states, 
it witnesses termination.

The procedure consists of five steps:
% (1) computing $T^n$, the formula encoding states that terminate within $n$ transitions.
% $T^n$ is used to construct the interpolant. It guarantees that no states other then $\Sink$ are reachable in $n$ transitions; 
(1) Computing $\NT^n$, the formula encoding states from which non-sink states 
can be reached within $n$ transitions.
$\NT^n$ is used to construct the interpolant. 
The negation of $\NT^n$ encodes the states from which $\Sink$ is guaranteed to be reached in at most $n$ transitions.
(2) Constructing an over-approximation of the terminating trace via interpolation. 
The usage of sink states allows the interpolant to capture the reason for termination.
(3) Converting the interpolant into an underapproximating DNF formula.
This enables efficient ranking function synthesis for each disjunct.
(4) Filtering disjuncts, preserving only disjuncts that encode well-founded relations. These disjuncts are used to construct a candidate transition invariant.
Only well-founded relations are preserved, as transition invariant needs to be disjunctively well-founded to witness termination. 
(5) Computing the formula $C(X)$, encoding states for which candidate transition invariant is valid.
Transition invariant needs to be valid only over the reachable states to conclude termination.
Therefore, if all reachable states satisfy $C(X)$, then $\TS$ is terminating.
The overall flow of the procedure is illustrated in Figure~\ref{fig:TrInvGen} and described in Algorithm~\ref{Alg:SNA_TrInv}.

\begin{figure}[t]
    \centering
    \includegraphics[width=\linewidth]{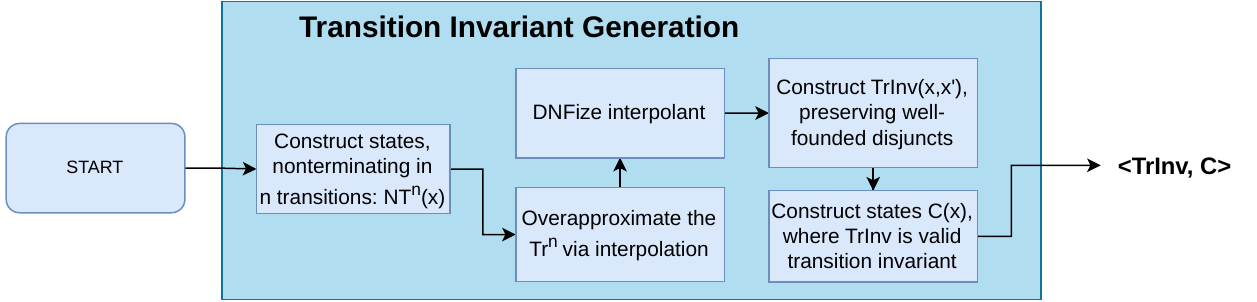}
    \caption{Execution Flow of Interpolation-based Transition Invariant Generation}
    \label{fig:TrInvGen}
\end{figure}

\begin{algorithm}[t!]
	\SetInd{0.5em}{0.5em}
    \DontPrintSemicolon
    \small
    \newcommand{\Continue}{\textbf{continue}}
    \caption{$\trInvSynthesis(P, \Tr^n, \TrInv)$.}
    \label{Alg:SNA_TrInv}
    \SetKwInOut{Input}{Input}
    \SetKwInOut{Output}{Output}
    \Input{Safety problem $P = \tuple{\Init, \Tr, \Sink, X}$, transition trace $\Tr^n(X^{(0)}, X^{(1)}, \dots, X^{(n)})$, candidate transition invariant $\TrInv(X,X')$}
    \Output{Updated transition invariant $\TrInv(X,X')$ and formula $C(X)$ encoding states covered by
    transition invariant}
    $\NT^n(X^{(0)}) \gets \QE(\exists X^{(1)}, \dots, X^{(n)}.$ $\Tr^{n}(X^{(0)}, X^{(1)}, \dots, X^{(n)}) \land \lnot \Sink(X^{(n)}))$ \label{SNA_TrInv:T} \\
    % $T^n(X^{(0)}) \gets \lnot \QE(\exists X^{(1)}, \dots, X^{(n)}.$ $\Tr^{n}(X^{(0)}, X^{(1)}, \dots, X^{(n)}) \land \lnot \Sink(X^{(n)}))$ \label{SNA_TrInv:T} \\
    % $itp(X,X') \gets Itp(\Tr^{n}(X, \dots, X'), T^n(X) \land \lnot \Sink(X'))$\label{SNA_TrInv:Itp}\\
    $itp(X,X') \gets \textsc{Itp}(\Tr^{n}(X, \dots, X'), \lnot \NT^n(X) \land \lnot \Sink(X'))$\label{SNA_TrInv:Itp}\\
    $\Cs(X,X') \gets \toDNF(itp)$ \label{SNA_TrInv:toDNF}\\
    $\TrInv \gets \TrInv \lor \bigvee_i \{d_i \in \Cs \mid \text{$\exists$ ranking function $f(X)$} \}$\label{SNA_TrInv:TrInv} \\
    $C(X) \gets Sink(X) \lor \lnot \QE(\exists X',X''.$ $\Tr(X,X') \land \bigl(Id(X',X'') \lor \TrInv(X',X'')\bigr) 
        \land \lnot \TrInv(X,X''))$ \label{SNA_TrInv:NC} \\
    \Return $\tuple{\TrInv, C}$
\end{algorithm}

Algorithm~\ref{Alg:SNA_TrInv} takes as input a safety problem $P = \tuple{\Init, \Tr, \newline\Sink, X}$,
a transition trace $\Tr^n(X^{(0)}, X^{(1)}, \dots, X^{(n)})$, and a transition invariant formula $\TrInv(X,X')$. 
It returns a tuple of formulas $\tuple{\TrInv, C}$, where $\TrInv(X,X')$ is a disjunctively well-founded transition invariant 
candidate, and $C(X)$ is a formula encoding the states for which $\TrInv$ is a valid transition invariant. 
It proceeds in the following steps:
% Define from the negation, not the itself

\textbf{Step 1}: \textit{Computing $\NT^n$ (line~\ref{SNA_TrInv:T}).} 
$\NT^n(X)$ encodes all states from which a non-sink state is reachable within $n$ transitions. 
The following property defines it: $\forall X^{(0)}. \NT^n(X^{(0)}) \leftrightarrow $
% $\NT^n(X)$ encodes all states guaranteed to reach a sink state within $n$ 
% transitions, is defined by the following property:
$$\exists X^{(1)}, \dots, X^{(n)}. \Tr^{n}(X^{(0)}, \dots, X^{(n)}) \land \lnot \Sink(X^{(n)})$$

$\NT^n$ is constructed with quantifier elimination, eliminating all variables from the formula except for $X^{(0)}$.
Construction of $\NT^n$ enables the overapproximation of the trace. 

\begin{example}\label{Ex:Tn}
    % Consider $\TS = \tuple{\Init, \Tr, X}$ with $\Init \triangleq  x > 0$, $\Tr(X,X') \triangleq  ((x' = x - 1 \lor x' = x - 2) \land x > 0) \lor (x < -10 \land x' = x - 1)$, $X \triangleq  \{x\}$.
    % For $n = 2$ quantifier elimination would yield $T^2(X) = x \leq 2$.
    % States with $x \in \{3,4\}$ can reach $\Sink$ in two transitions but are 
    % not guaranteed to do so, so they are not included in $T^2$.
    % States with $x \in \{1,2\}$ terminate in 1 or 2 transitions.
    Consider $\TS = \tuple{\Init, \Tr, X}$ with $\Init \triangleq  x > 0$, 
    $\Tr(X,X') \triangleq  (x' = x - 1 \lor x' = x - 2) \land (x > 0 \lor x < -10)$, $X \triangleq  \{x\}$.
    The sink states are $\Sink(X) \triangleq x \leq 0 \land x \geq -10$.
    For $n = 1$, quantifier elimination yields $\NT^1(X) = x > 1 \land x < -10$: 
    states from which $\lnot \Sink$ is reachable in one transition.
    % States with $x \in \{3,4\}$ can reach $\Sink$ in two transitions, but not necessarily,
    % therefore these states are included in $\NT^2$.
\end{example}

\textbf{Step 2}: \textit{Interpolation (line~\ref{SNA_TrInv:Itp}).}
The following formula is unsatisfiable by construction since any state that satisfies 
$\lnot \NT^n$ must reach $\Sink$ within at most $n$ steps:
$$\lnot \NT^{n}(X^{(0)}) \land \Tr^{n}(X^{(0)}, \dots, X^{(n)}) \land \lnot \Sink(X^{(n)})$$

This allows us to compute a \emph{Craig interpolant} that overapproximates the trace $\Tr^{n}$.
Crucially, the usage of the sink states for the interpolant construction allows the interpolant to capture the reason \emph{why} 
the particular trace is terminating.

\textbf{Step 3}: \textit{DNF underapproximation (line~\ref{SNA_TrInv:toDNF}).}
For a transition invariant to witness termination, it must be 
disjunctively well-founded: every disjunct must encode a 
well-founded relation. 
Converting the interpolant to DNF would simplify the well-foundedness
check for separate disjuncts.
However, complete DNF conversion can cause an exponential 
blowup in formula size and take a significant amount of time. 
Moreover, even a complete DNF is not guaranteed to yield a 
transition invariant.

Instead, our algorithm follows a lazy approach: it constructs an 
\emph{under-approximating} DNF formula on demand. 
Specifically, the algorithm constructs disjuncts as follows:
(1) the SMT solver produces a model that satisfies $itp$; (2) 
the literals satisfied by this model are extracted and 
conjoined to form a single disjunct;
(3) this disjunct is blocked from $itp$ and the process repeats 
until some $m$ disjuncts are constructed.
The result is a formula:
$$\Cs(X,X') \triangleq \bigvee_{i=0}^m d_i$$
where each $d_i$ is a conjunction, and $\Cs(X,X') \rightarrow itp(X,X')$.
This approach significantly reduces formula size and speeds up both the DNF
construction and subsequent well-foundedness check, while preserving correctness: 
in the worst case, $\Cs$ fails to be a transition invariant, but this is also 
possible even with a complete DNF conversion.
% Empirically, bounding the number of disjuncts produces more 
% compact and efficiently verifiable termination proofs.
% Our algorithm produces a model satisfying the SMT formula.
% Using produced model, it extracts the satisfied literals from the SMT formula, and constructs a conjunction of these literals.
% This conjunction is one of the dinsjuncts in the constructed DNF. This conjunction is blocked in the original formula, and the process proceeds.
% It allows to significantly reduce the size of the resulting formula and speed up the DNF conversion, while preserving correctness: 
% in the worst case, the resulting formula is not a transition invariant (which can happen even if the complete DNF is produced).
% Empirically, we observed that limiting the number of disjuncts results in more size-efficient and quick termination proofs.
% The execution of the procedure returns a formula:

% The number of disjuncts is bounded by a constant (a tunable parameter depending on the available size of RAM), since DNF 
% conversion can cause an exponential blowup in formula size.  Move to implementation & evaluation
% Bring citation 3 to background section

\textbf{Step 4}: \textit{Filtering Well-Founded Disjuncts (line~\ref{SNA_TrInv:TrInv}).}
For each disjunct $d(X,X') \in \Cs(X,X')$, the algorithm checks if $d(X,X')$ encodes a well-founded relation.
Our technique synthesizes linear ranking functions automatically following~\cite{DBLP:conf/vmcai/PodelskiR04} for checking the existence of a ranking function.
The disjuncts admitting a ranking function are preserved, the rest are discarded.
The resulting formula is disjoined with the input candidate transition invariant $\TrInv$.

\textbf{Step 5}: \textit{Synthesis of the states for which transition invariant is valid (line~\ref{SNA_TrInv:NC}).}
If $\TrInv$ is a transition invariant over all reachable states, then $\TS$ terminates, since every disjunct encodes a well-founded relation.
To verify this, the procedure computes the formula $C(X)$ encoding the states over which $\TrInv$ is a valid transition invariant (including sink states, as any candidate invariant is valid over $\Sink$).
% Split this thing into 2 properties
% Make it consistent with the code
\begin{definition}[Coverage formula]
    Given a transition system $\TS = \tuple{\Init, \Tr, X}$ and a formula $\TrInv(X,X')$, 
    a formula $C(X)$ is called coverage, if the following holds $\forall X, X',X''$:
    \begin{align*}
     C(X) \land \Tr(X,X') \rightarrow& \TrInv(X,X') \\
     C(X) \land \Tr(X,X') \land \TrInv(X',X'') \rightarrow& \TrInv(X,X'')
    \end{align*}
 That is, $\TrInv$ is a transition invariant over states that satisfy $C(X)$.
\end{definition}

Similarly to $\NT^{n}$, $C(X)$ is constructed by applying quantifier elimination, negating the states for 
which $\TrInv$ is not a transition invariant. It is a coverage formula by construction.
% example has an issue!
\begin{example}\label{Ex:C}
    Continuing Example~\ref{Ex:Tn}, suppose the procedure constructs an interpolant
    $itp(X,X') \triangleq x' < x \land (x > 0 \lor x < -10)$. 
    Afterwards, algorithm would construct DNF form: $(x' < x \land x > 0 ) \lor (x' < x \land x < -10 )$, and 
    since the second disjunct is not well founded, it would be dropped, resulting in $\TrInv(X,X') \triangleq x' < x \land x > 0$. 
    Quantifier elimination at line~\ref{SNA_TrInv:NC} then returns $C(X) \triangleq (x \geq -10 \land x \leq 0) \lor x > 0$.
\end{example}

\emph{Note.} If $\lnot C(X)$ is unreachable, then $\TrInv(X,X')$ is a disjunctively well-founded
transition invariant over all reachable states, and witnesses termination of the transition system.
For illustration, the transition invariant in Example~\ref{Ex:C} is sufficient to prove termination
since the states satisfying $\lnot C(X) \triangleq x < -10$ cannot be reached within $\TS$.

\begin{lemma}\label{Lem:Term}
    Let $\TS = \tuple{\Init, \Tr, X}$ be a transition system and let 
    $C(X)$ be a formula such that $\lnot C(X)$ is unreachable in 
    $\TS$. If $\TrInv(X,X')$ is a finite disjunction of formulas encoding well-founded
    relations and satisfies $\forall X, X',X''$:
    \begin{align*}
     C(X) \land \Tr(X,X') \rightarrow& \TrInv(X,X') \\
     C(X) \land \Tr(X,X') \land \TrInv(X',X'') \rightarrow& \TrInv(X,X'')
    \end{align*}    
    then $\TS$ is terminating.
\end{lemma}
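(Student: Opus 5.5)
The plan is to reduce the statement to the classical result of Podelski and Rybalchenko~\cite{DBLP:conf/lics/PodelskiR04}: if the transitive closure of $\Tr$, restricted to reachable states, is contained in a finite union of well-founded relations, then the system has no infinite run. The proof therefore has two parts. First, show that the two coverage conditions, together with unreachability of $\lnot C(X)$, make $\TrInv$ a transition invariant in the sense of the Definition above, i.e.\ $R(X) \land \Tr^+(X,X') \rightarrow \TrInv(X,X')$. Second, apply the Ramsey-style argument to rule out infinite runs.

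\textbf{Step 1: $\TrInv$ is a transition invariant.} Since $\lnot C(X)$ is unreachable, every reachable state satisfies $C$, i.e.\ $R(X) \rightarrow C(X)$. Reachable states are closed under $\Tr$, so along any path $X^{(0)},\dots,X^{(k)}$ with $R(X^{(0)})$, every $X^{(j)}$ satisfies $R$ and hence $C$. I would prove by induction on $k \geq 1$ that $R(X^{(0)}) \land \Tr^{k}(X^{(0)},\dots,X^{(k)}) \rightarrow \TrInv(X^{(0)},X^{(k)})$.
\begin{itemize}
\item For $k=1$, this is the first coverage implication applied at $X^{(0)}$.
\item For $k>1$, apply the induction hypothesis to the suffix starting at $X^{(1)}$, which is reachable. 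This gives $\TrInv(X^{(1)},X^{(k)})$. The second coverage implication, with $X=X^{(0)}$, $X'=X^{(1)}$ and $X''=X^{(k)}$, then yields $\TrInv(X^{(0)},X^{(k)})$.
\end{itemize}
Hence $R(X)\land\Tr^+(X,X')\rightarrow\TrInv(X,X')$.

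\textbf{Step 2: no infinite run.} Suppose there is an infinite run $s_0,s_1,\dots$ with $\Init(s_0)$. All $s_i$ are reachable, so by Step~1 every pair $i<j$ satisfies $\TrInv(s_i,s_j)$. Write $\TrInv = W_1\lor\dots\lor W_m$ with each $W_l$ well-founded, and colour each pair $\{i,j\}$ with $i<j$ by some $l$ such that $W_l(s_i,s_j)$ holds. By the infinite Ramsey theorem there is an infinite set $i_0<i_1<\dots$ whose pairs all receive the same colour $l$. Then $(s_{i_0},s_{i_1},\dots)$ is an infinite $W_l$-descending chain, contradicting well-foundedness. So every run from an initial state is finite.

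\textbf{Main obstacle.} The main obstacle is matching this conclusion to the paper's definition of termination. That definition asks, for each initial state, for a uniform bound $n$ such that no trace of length $n$ exists. ``No infinite run'' implies this bound by König's lemma only when $\Tr$ is finitely branching. With unbounded nondeterminism, which LIA relations can express (e.g.\ a first step choosing an arbitrary $x'\geq 0$ followed by a countdown), the two notions differ. I would handle this by arguing on the execution tree rooted at an initial state. If no uniform bound $n$ exists, one would like to extract an infinite branch, and that extraction is exactly König's argument. Without finite branching it fails, so I would state explicitly that termination is understood as the absence of infinite runs, which is the standard reading and the one the Note after Example~\ref{Ex:C} relies on. If the uniform-bound reading is intended, I would add a finite-branching assumption and invoke König's lemma.
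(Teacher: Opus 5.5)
Your proof is correct and follows the paper's. The paper runs the same induction: peel off the first transition, use closure of $R$ under $\Tr$ to apply the hypothesis at the successor, then close with the second coverage condition. It then simply cites Podelski--Rybalchenko where you inline the Ramsey argument.

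Your caveat is genuine and applies to the paper's proof too, since that citation only gives the absence of infinite runs. A system of the kind you describe (an arbitrary choice $x' \geq 0$ followed by a countdown) satisfies every hypothesis of the lemma, e.g.\ with $C = \top$ and a two-disjunct $\TrInv$. Yet it violates the paper's uniform-bound definition of termination. So either the no-infinite-run reading or a finite-branching assumption with K\"onig's lemma is needed.
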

\begin{proof}
    We show that $\TrInv$ is a disjunctively well-founded transition 
    invariant over the reachable states of $\TS$, from which 
    termination follows by~\cite{DBLP:conf/lics/PodelskiR04}.

    \textbf{Disjunctive well-foundedness.} By construction, for each disjunct 
    $d_i$ of $\TrInv = \bigvee_i d_i$ there exists a ranking function.
    Therefore, every disjunct is a logical encoding of 
    well-founded relation~\cite{DBLP:conf/vmcai/PodelskiR04}. 
    Hence $\TrInv$ is disjunctively well-founded.

    \textbf{Transition invariant.} We show that 
    $\forall X,X'.\ R(X) \land \Tr^+(X,X') \rightarrow \TrInv(X,X')$,
    where $R(X)$ denotes the reachable states of $\TS$, by induction 
    on the number of steps $n$ in $\Tr^+(X,X')$.

    \emph{Base case} ($n = 1$): It is needed to prove that:
    $$R(X) \land \Tr(X,X') \rightarrow \TrInv(X,X')$$ 
    Since $\lnot C(X)$ is unreachable, $R(X) \rightarrow C(X)$,
    and therefore $R(X) \land \Tr(X,X') \rightarrow C(X) \land \Tr(X,X')$. 

    The following assumption holds by the lemma statement:
    $$C(X) \land \Tr(X,X') \rightarrow \TrInv(X,X').$$
    Therefore the base case holds.

    \emph{Inductive step}: Assume $R(X) \land \Tr^n(X, \dots,X') \rightarrow 
    \TrInv(X,X')$ for some $n \geq 1$. 
    Let $\Tr^{n+1}(X,X'')$ hold, so there $\exists X, X', \dots, X''. R(X) \land \Tr(X, X') \land \Tr^n(X',\dots,X'')$. 
    Since $R(X)$ is a formula encoding all reachable states: $R(X) \land \Tr(X, X') \land \Tr^n(X',\dots,X'') \rightarrow 
    R(X) \land \Tr(X, X') \land R(X') \land \Tr^n(X',\dots,X'')$.
    Since $R(X) \land \Tr^n(X, \dots, X') \rightarrow \TrInv(X,X')$, then trivially
    $R(X) \land \Tr(X, X') \land R(X') \land \Tr^n(X',\dots,X'') \rightarrow R(X) \land \Tr(X,X') \land \TrInv(X',X'')$.
    Now, due to the fact that $R(X) \rightarrow C(X)$, the following holds:
    $R(X) \land \Tr(X, X') \land \TrInv(X',X'') \rightarrow C(X) \land \Tr(X,X') \land \TrInv(X',X'').$
    By the lemma hypothesis:
    $$C(X) \land \Tr(X,X') \land \TrInv(X',X'') \rightarrow \TrInv(X,X'').$$

    Therefore, through the chain of implications, we have:
    $$R(X) \land \Tr^{n+1}(X,X') \rightarrow \TrInv(X,X').$$

    Thus $\TrInv$ is a disjunctively well-founded transition 
    invariant over all reachable states, and $\TS$ is 
    terminating.
\end{proof}

%-----------------------------------------------------------------------
\subsection{Interpolation-based Termination Analysis}
%-----------------------------------------------------------------------

This subsection presents the full termination analysis procedure using the interpolation-based transition invariant generation.
It integrates \trInvSynthesis into the main loop of Algorithm~\ref{Alg:SNA}.
At each iteration, instead of only blocking terminating states, the algorithm also constructs a transition invariant 
candidate. 
The overall flow is shown in Figure~\ref{fig:SNATrInv}.
 
\begin{figure}[t]
    \centering
    \includegraphics[width=\linewidth]{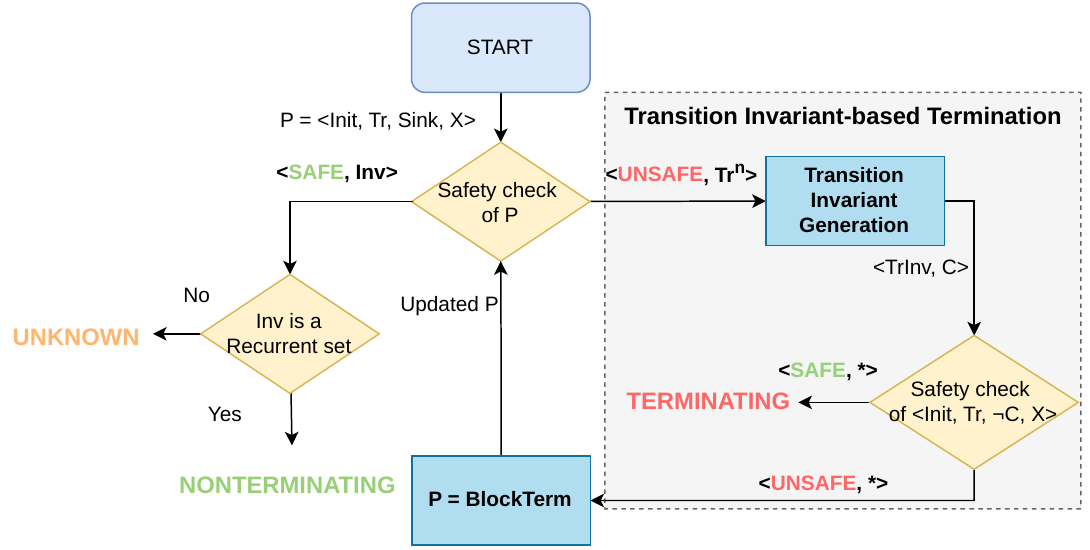}
    \caption{Execution Flow of Interpolation-based Transition Invariant Generation (New components are placed in a grey block, lines~\ref{SNA_Term:Synth},\ref{SNA_Term:CheckNc},\ref{SNA_Term:NEWTERM}).}
    \label{fig:SNATrInv}
\end{figure}

%  Notation is confusing, replace everything with phis, no need for phi'
%  On the second safety check use *
\begin{algorithm}[t!]
	\SetInd{0.5em}{0.5em}
    \DontPrintSemicolon
    \small
    \newcommand{\Continue}{\textbf{continue}}
    \caption{Interpolation-based Transition Invariant Generation for Proving and Disproving Termination.}
    \label{Alg:SNA_Term}
    \SetKwInOut{Input}{Input}
    \SetKwInOut{Output}{Output}
    \SetKwInOut{Data}{Data}
    \Input{Safety problem $P = \tuple{\Init, \Tr, \Sink, X}$}
    \Output{\TERM $\mid$ \NONTERM $\mid$ \UNKNOWN}
    \Data{$\TrInv$ - formula that encodes transition invariant}
    $\TrInv \gets \bot$ \\
    \While{$\top$} {
        $\tuple{\res, \phi = \Tr^n\ or\ \Inv} \gets \safetyCheck(P)$ \label{SNA_Term:Safety}\\
        \eIf{$\res = \UNSAFE$} {
            $\tuple{\TrInv, C} \gets \trInvSynthesis(P, \phi, \TrInv)$ \label{SNA_Term:Synth} \\
            $\tuple{\res', \_} \gets \safetyCheck(\tuple{\Init, \Tr, \lnot C, X})$ \label{SNA_Term:CheckNc} \\
            \lIf{$\res' = \UNSAFE$} {%
                $P \gets \blockTerm(P, \phi)$\label{SNA_Term:BlockTerm}
            }
            \textbf{else} \Return $\TERM$\label{SNA_Term:NEWTERM}
        }{
            \uIf{$\forall X. \phi(X) \rightarrow \exists X'. \Tr(X,X') \land \phi(X')$} {
                \Return $\NONTERM$ \label{SNA_Term:NONTERM}
            }
            \Return \UNKNOWN
        }
        
    }
\end{algorithm}

The Algorithm~\ref{Alg:SNA_Term} takes as input a safety problem $P=\tuple{\Init, \Tr, \Sink, X}$, where $\Init$ and 
$\Tr$ encode the initial states and transition relation, $\Sink$ is the formula encoding sink states, and $X$ is 
the set of variables over which the transition system is defined.
As an output, the algorithm returns \NONTERM if $\TS$ is nonterminating, \TERM if it is terminating, and \UNKNOWN if the procedure 
cannot determine the final result.
The procedure maintains a candidate transition invariant $\TrInv(X,X')$.
During the execution, $\TrInv$ is extended with new disjuncts, which are generated by the \trInvSynthesis procedure, and encode well-founded relations. 
Changes to the original SNA are at lines~\ref{SNA_Term:Synth},~\ref{SNA_Term:CheckNc}, and~\ref{SNA_Term:NEWTERM}.

Every iteration of Algorithm~\ref{Alg:SNA_Term}, similarly to the Algorithm~\ref{Alg:SNA}, 
begins with the safety check at line~\ref{SNA_Term:Safety}.
The algorithm handles two possible cases:

\paragraph{\UNSAFE} A terminating trace $\Tr^n$ exists.
The algorithm calls 
\trInvSynthesis\ to extend $\TrInv$ and compute the coverage formula $C(X)$ (line~\ref{SNA_Term:Synth}).
Then, it checks if $\lnot C(X)$ is reachable from the initial states (line~\ref{SNA_Term:CheckNc}).
If $\lnot C(X)$ is reachable, the algorithm refines the safety problem by blocking 
the states that deterministically lead to termination via \blockTerm, and the loop continues.
Otherwise, $\TrInv$ is a transition invariant over all reachable states, and the algorithm concludes termination.

\paragraph{\SAFE} No sink state is reachable, and there exists an inductive invariant $\Inv(X)$. 
If $\Inv(X)$ is a recurrent set, the system is nonterminating. 
Otherwise, the algorithm returns \UNKNOWN as the analysis is inconclusive.

\begin{example}
    Consider $\TS$ from Example~\ref{Ex:Tn} with 
    $\Init \triangleq x > 0$, 
    $\Tr \triangleq (x'=x-1 \lor x'=x-2) \land (x>0 \lor x<-10)$, 
    $X \triangleq \{x\}$.
    Preprocessing computes 
    $\Sink \triangleq x \leq 0 \land x \geq -10$.
    Then, the algorithm is called with $P=\tuple{\Init, \Tr, \Sink, X}$.

    The safety check detects a trace $\Tr^1$ reaching $\Sink$. 
    \trInvSynthesis produces a transition invariant candidate $\TrInv(X,X') \triangleq x' < x \land x > 0$ 
    and the formula $C(X) \triangleq (x \leq 0 \land x \geq -10) \lor x > 0$.
    The safety check for $\tuple{\Init, \Tr, \lnot C(X), X}$ returns \SAFE, and the algorithm concludes 
    that the transition system is terminating.

    \textbf{Limitation of the approach.} Consider a different $\TS$, 
    with $X = \{x, u\}$, $\Init(X)   \triangleq u < 10 \land x > 0 $, and
    $\Tr(X, X') \triangleq ((x' = x - 1 \land u' = u ) \lor (x' = x + 1 \land u' = u + 1 \land u < 10)) \land x > 0$.
    Intuitively, each transition either decrements $x$ (and keeps $u$ unchanged), or 
    increments both $x$ and $u$ until $u < 10$. 
    Once $u = 10$, only the decrementing transition is possible, and $x$ decreases monotonically to zero. 
    Hence $\TS$ is terminating.
    The sink states are $\Sink(X) \triangleq x \leq 0$.

    The Algorithm~\ref{Alg:SNA_Term} detects a trace $\Tr^1(X^{(0)}, X^{(1)})$ reaching $\Sink$ states.
    \trInvSynthesis\ constructs transition invariant: $\TrInv(X, X') \;\triangleq\; x' < x \land x > 0$. 
    The coverage formula is computed as: $C(X) \;\triangleq\; x \leq 0 \lor (x>0 \land u = 10)$.
    The safety check for $\tuple{\Init, \Tr, \lnot C, X}$ returns \UNSAFE, since states 
    with $u < 10$ are reachable from $\Init$. 
    \blockTerm then blocks the states that are guaranteed to reach sink:
    $\Tr(X, X') \;\gets\; \Tr(X, X') \land x' > 0$.

    The algorithm calls $\safetyCheck(P)$, which returns $\SAFE$ with inductive invariant 
    $\Inv(X) \triangleq x > 0$. 
    However, $\Inv$ is not a recurrent set, so Algorithm~\ref{Alg:SNA_Term} returns \textsc{Unknown}.

    The root cause why the algorithm cannot prove termination is that $\TrInv$ does 
    not cover states with $u < 10$. 
    It captures why the system terminates when $u = 10$, but not the global argument: $u$ 
    grows until at most $u = 10$, after which $x$ decreases monotonically. 
    A complete termination proof requires a \emph{second} disjunct covering the $u < 10$ region.
    In Section~\ref{Sec:Extension}, we present an extension that focuses on the state space
    for which the transition invariant candidate is not valid, aiming to extend the transition invariant.
\end{example}

%-----------------------------------------------------------------------
\subsection{Correctness}
%-----------------------------------------------------------------------

This subsection proves the soundness of the introduced approach. We first prove that when the algorithm returns \TERM, the transition system is terminating,
and then we prove the correctness for the nontermination.
%  Add well-foundness definition

\begin{lemma}\label{Lem:TERM_N}
    If Algorithm~\ref{Alg:SNA_Term} returns $\TERM$, then the transition system $\tuple{\Init, \Tr, X}$ is terminating.
\end{lemma}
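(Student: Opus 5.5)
The plan is to reduce the statement to Lemma~\ref{Lem:Term}, applied not to the original system but to the \emph{current} safety problem $P = \tuple{\Init_k, \Tr_k, \Sink, X}$ held by Algorithm~\ref{Alg:SNA_Term} when it returns \TERM{} (after $k$ refinements by \blockTerm). This needs two ingredients: (a) the current system $\tuple{\Init_k,\Tr_k,X}$ is terminating, and (b) termination of every refined system implies termination of the original $\tuple{\Init,\Tr,X}$. I will prove (b) as an invariant, by induction over the calls to \blockTerm.

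\textbf{Step (a).} \TERM{} is returned at line~\ref{SNA_Term:NEWTERM} only when $\safetyCheck(\tuple{\Init_k,\Tr_k,\lnot C,X})$ returns \SAFE. By soundness of the safety check, the inductive invariant it returns shows that $\lnot C$ is unreachable in the current system. The formula $\TrInv$ is a finite disjunction, since every call to \trInvSynthesis adds finitely many disjuncts $d_i$. Each $d_i$ admits a ranking function, so it encodes a well-founded relation.

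It remains to check the two coverage implications for $C$ from line~\ref{SNA_TrInv:NC}, with respect to $\Tr_k$. I split on the first disjunct of $C$.
\begin{itemize}
\item If $\Sink(X)$ holds, then $\Tr_k(X,X')$ is unsatisfiable, because $\Tr_k \rightarrow \Tr$ (\blockTerm only strengthens $\Tr$). Both implications then hold vacuously.
\item Otherwise $\lnot\exists X',X''.\,\Tr_k(X,X')\land(\Id(X',X'')\lor\TrInv(X',X''))\land\lnot\TrInv(X,X'')$ holds.
\begin{itemize}
\item Instantiating with the $\Id$ disjunct and $X''=X'$ gives $\Tr_k(X,X')\rightarrow\TrInv(X,X')$.
\item Instantiating with the $\TrInv$ disjunct gives $\Tr_k(X,X')\land\TrInv(X',X'')\rightarrow\TrInv(X,X'')$.
\end{itemize}
\end{itemize}
Lemma~\ref{Lem:Term} then yields termination of $\tuple{\Init_k,\Tr_k,X}$.

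\textbf{Step (b).} This is the main obstacle. It requires a semantic reading of \blockTerm (Algorithm~\ref{Alg:SNABlock}) that the algorithm states only operationally. Consider one call with trace $\Tr^n$ and sets $S_0,\dots,S_n$, where $S_n \rightarrow \Sink$. The key claim is: if the check at line~\ref{SNA:BLOCK} failed for all $j$ with $i<j\le n-1$, then every run of $\Tr$ from a state in $S_{i+1}$ terminates. Indeed, unsatisfiability of $S_j(X)\land\Tr(X,X')\land\lnot S_{j+1}(X')$ forces every successor of a state in $S_j$ into $S_{j+1}$, and $S_n$ has no successors. The same argument with $i=-1$ covers the initial-state blocking at line~\ref{SNA:BLOCKINIT}: every run from a removed initial state reaches $\Sink$ within $n$ steps.

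Now suppose the system before the call has an infinite run from an initial state. Two facts show it is already an infinite run of the refined system.
\begin{itemize}
\item In the transition-blocking case, the run never uses a removed transition, i.e.\ one entering $S_{i+1}$, since after such a transition the run would be finite.
\item In the initial-state case, the run cannot start in a removed initial state, since every run from such a state is finite.
\end{itemize}
By contraposition, termination of the refined system implies termination of the previous one. Chaining this over the $k$ refinements and combining with (a) gives termination of $\tuple{\Init,\Tr,X}$.

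The delicate point in (b) is that the $S_j$ are computed by \QE{} from the constrained trace $\Init(X^{(0)})\land\Tr^n\land\Sink(X^{(n)})$, and the claim relies only on the failed determinism checks for $j>i$, not on how the $S_j$ were derived. I would state this reliance explicitly.
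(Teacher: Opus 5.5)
Your decomposition is the paper's: apply Lemma~\ref{Lem:Term} to the final refined problem, then use the fact that everything \blockTerm removed leads deterministically to $\Sink$. You justify what the paper only asserts: the coverage conditions, which you derive from line~\ref{SNA_TrInv:NC}, and determinism, which you derive from the failed checks at line~\ref{SNA:BLOCK}. Part (a) and the transition-blocking case of (b) are correct.

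The step that fails, for Algorithm~\ref{Alg:SNABlock} as printed, is the initial-state case (``the same argument with $i=-1$''). That argument needs the checks for all $j=0,\dots,n-1$ to have failed. The loop, however, is \texttt{while} $i>0$, so it tests only $j=n-1,\dots,1$. The step $S_0\to S_1$, i.e.\ the transitions leaving the initial states, is never tested, and for $n=1$ the loop body never runs. A state removed from $S_0$ may therefore have a successor outside $S_1$ that starts an infinite run.

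For example, take $\Init \triangleq x=1 \lor x=4$ and $\Tr \triangleq (x=1 \land (x'=0 \lor x'=-5)) \lor (x\le -5 \land x'=x-1) \lor (x=4 \land x'=3) \lor (x=3 \land x'=2)$. For the length-1 trace $1\to 0$, line~\ref{SNA:BLOCKINIT} deletes $x=1$ from $\Init$, although $1\to -5\to -6\to\cdots$ is infinite. Suppose the oracles give admissible answers: interpolant $\Tr$ in the first round, and the projection of $\Tr^2$ in the second. Then the next round proves the system from $x=4$ terminating and returns \TERM. So the lemma itself fails for the literal pseudo-code. The paper's proof hides this by simply asserting that blocked states ``deterministically lead to termination''.

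Your argument is complete once the loop bound is read as $i\ge 0$, and that is evidently the intent. In the limitation example after Algorithm~\ref{Alg:SNA_Term}, a length-1 trace produces the transition block $\Tr\gets\Tr\land x'>0$, which is only possible if the check at $i=0$ runs. State this correction explicitly as an assumption of your proof.
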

\begin{proof}
    $\TERM$ is returned only at line~\ref{SNA_Term:NEWTERM}.
    It means that there exists a transition invariant $\TrInv$ which is valid over states $C(X)$, and $\lnot C(X)$ is 
    unreachable from the initial states.
    The safety problem is updated iteratively, blocking the states that deterministically lead to termination,
    so in the end the $P = \tuple{\Init', \Tr', \Sink, X}$, where $\Init'(X) = \Init(X) \land \bigwedge_{i=0}^{m} \lnot S_i(X)$
    and $\Tr' = \Tr(X,X') \land \bigwedge_{j=0}^{l} \lnot S_j(X')$ .
    According to Lemma~\ref{Lem:Term}, transition system $\tuple{\Init', \Tr', X}$ is terminating, since $\TrInv$ is a disjunctively well-founded
    transition invariant over all reachable states.

    Since the states satisfying $\bigvee_{i=0}^{m} S_i(X)$ and $\bigvee_{j=0}^{l} S_j(X')$ deterministically lead to the termination, 
    every trace of $\tuple{\Init, \Tr, X}$ either follows $\Tr'$ or passes through a blocked state, which leads to termination.
    Hence $\tuple{\Init, \Tr, X}$ is terminating.
\end{proof}

\begin{lemma}
    If Algorithm~\ref{Alg:SNA_Term} returns $\NONTERM$, then the transition system $\tuple{\Init, \Tr, X}$ is nonterminating.
\end{lemma}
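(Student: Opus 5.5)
The plan is to reduce the claim to the correctness of the original SNA (Lemma~\ref{Lem:NONTERM}) by showing that the modifications in Algorithm~\ref{Alg:SNA_Term} never alter the safety problem on the path that leads to \NONTERM. \NONTERM is returned only at line~\ref{SNA_Term:NONTERM}. This happens in the \SAFE branch, where $\phi = \Inv$ is an inductive invariant of the current, possibly refined, problem $P = \tuple{\Init', \Tr', \Sink, X}$. The new lines~\ref{SNA_Term:Synth} and~\ref{SNA_Term:CheckNc} only compute $\TrInv$ and $C$ and run an auxiliary safety check; they do not modify $P$. The only updates to $P$ happen at line~\ref{SNA_Term:BlockTerm}, via the same \blockTerm used by SNA. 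Hence the sequence of problems is exactly one that SNA could produce, so $\Init'$ and $\Tr'$ have the shape $\Init' = \Init \land \bigwedge_i \lnot S_i$ and $\Tr' = \Tr \land \bigwedge_j \lnot S_j(X')$.

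Next I would argue directly that $\Inv$ yields a recurrent set of the original system. The check at line~\ref{SNA_Term:NONTERM} is stronger than SNA's: every $\Inv$-state has a $\Tr$-successor that is itself in $\Inv$. So condition (1) of the recurrent-set definition holds for $N = \Inv$ with respect to $\Tr$. For condition (2), I would use that $\Init' \rightarrow \Inv$. I would also need $\Init'$ to be satisfiable, and I expect this to be the main obstacle. If $\Init'$ is empty, $\Inv = \bot$ satisfies the check vacuously. Two options handle this. The first is to show that \blockTerm never removes all initial states unless the system terminates; but the algorithm would then keep returning \UNSAFE traces until the initial states are exhausted, after which safety holds trivially, so this option does not obviously work. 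The second, which I would try first, is to follow the convention of~\cite{DBLP:conf/tacas/ChenCFNO14} that the recurrent-set check includes nonemptiness of $\Inv \land \Init'$, and to note that $\Init' \rightarrow \Init$, so $\exists X.\,\Inv(X)\land\Init(X)$.

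Given both conditions, $\Inv$ is a recurrent set of $\tuple{\Init,\Tr,X}$. From any initial state in $\Inv$ we can build an infinite $\Tr$-path by repeatedly choosing successors inside $\Inv$ (dependent choice). This witnesses nontermination by the definition in the preliminaries. Alternatively, one can observe that any $\Tr'$-step is also a $\Tr$-step, since $\Tr' \rightarrow \Tr$ and $\Init' \rightarrow \Init$, so a recurrent set for the refined system is one for the original system, and then invoke Lemma~\ref{Lem:NONTERM}.
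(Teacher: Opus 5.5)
Your proposal is correct, and its backbone is the paper's own argument. The paper proves the lemma in one sentence: the nontermination part of Algorithm~\ref{Alg:SNA_Term} is exactly SNA, so Lemma~\ref{Lem:NONTERM} applies. You make that reduction precise where the paper is loose:
lines~\ref{SNA_Term:Synth} and~\ref{SNA_Term:CheckNc} never modify $P$;
the only update to $P$ is the same \textsc{BlockTerm} call;
and the final check of Algorithm~\ref{Alg:SNA_Term} is not literally Algorithm~\ref{Alg:SNA}'s check but a stronger one.
So any run ending in \textsc{NONTERM} is mirrored by an SNA run ending in \textsc{NONTERM}.

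Your direct recurrent-set argument unfolds Lemma~\ref{Lem:NONTERM}, and it exposes a side condition the paper never mentions. Condition~(2) of the recurrent-set definition needs $\Init'$ to be satisfiable, and neither algorithm tests this. The worry is genuine. Take $\Init \triangleq x=5$ and $\Tr \triangleq x'=x-1\land x>0$, so $\Sink \equiv x\le 0$ and $\NT^5 \equiv x>5$.
\begin{itemize}
\item The only trace has length $5$.
\item The pair $(5,4)$ satisfies $\lnot\NT^5(X)\land\lnot\Sink(X')$, so no interpolant excludes-fails it; hence no $\TrInv$ built from it contains $(5,4)$.
\item Therefore $C$ excludes $x=5$, so $\lnot C$ is reachable.
\item The trace is deterministic, so \textsc{BlockTerm} sets $\Init'$ to $\bot$.
\item A safety engine may then return the legitimate invariant $\Inv=\bot$, which passes the check vacuously on a terminating system.
\end{itemize}

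So needing nonemptiness is not a weakness of your route relative to the paper's. The paper's proof relies on the same condition implicitly, through the cited lemma. Your route buys a self-contained argument and an honest statement of that assumption; the paper's buys brevity by hiding it.

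Do not attribute the fix to a convention of the cited work. State it outright as an extra check before returning \textsc{NONTERM}: satisfiability of $\Init'$, or equivalently of $\Inv\land\Init'$, since $\Init'\rightarrow\Inv$. With that check in place, your argument is complete.
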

\begin{proof}
    The proof of this lemma follows directly from Lemma~\ref{Lem:NONTERM}, 
    as the nontermination analysis in Algorithm~\ref{Alg:SNA_Term} 
    follows exactly the same process as Algorithm~\ref{Alg:SNA}.
\end{proof}

\section{Extended Interpolation-based Transition Invariant Generation for Proving and Disproving Termination}\label{Sec:Extension}

Algorithm~\ref{Alg:SNA_Term} is already capable of proving termination 
efficiently through the construction of transition invariants.
This section describes an extension of the algorithm that aims to improve interaction between termination and nontermination analyses.
Additionally, we present further extensions to the original SNA algorithm that improve overall performance. 
The high-level flow of the extended algorithm can be seen in Figure~\ref{fig:SNA++}.

\subsection{Analysis of States for which Transition Invariant is not Valid}~\label{Sec:Refinement}

A transition invariant candidate can be used to guide the termination analysis.
Particularly, Algorithm~\ref{Alg:SNA_Term} constructs the formula $C(X)$ encoding states for which the transition invariant is valid.
If $\lnot C(X)$ is reachable, there exist states for which the transition invariant is not valid, and which potentially can lead to nontermination.
The intuition is to compute the reachable states $S_m$ within $\lnot C(X)$ (which corresponds to line~\ref{SNA_Refined:Sm} of Algorithm~\ref{Alg:SNA_Refined}).
Then, it is possible to run (non)termination check for these particular states within $\TS$ (line~\ref{SNA_Refined:PushNewProblem}).

In the case of termination, the algorithm produces a witness transition invariant that extends the 
transition invariant of the original system, making it valid for more states.
If nontermination is proved for $S_m$, then the original $\TS$ is also nonterminating, since $S_m$ is 
reachable from the initial states.

\begin{figure}[t]
    \centering
    \includegraphics[width=\linewidth]{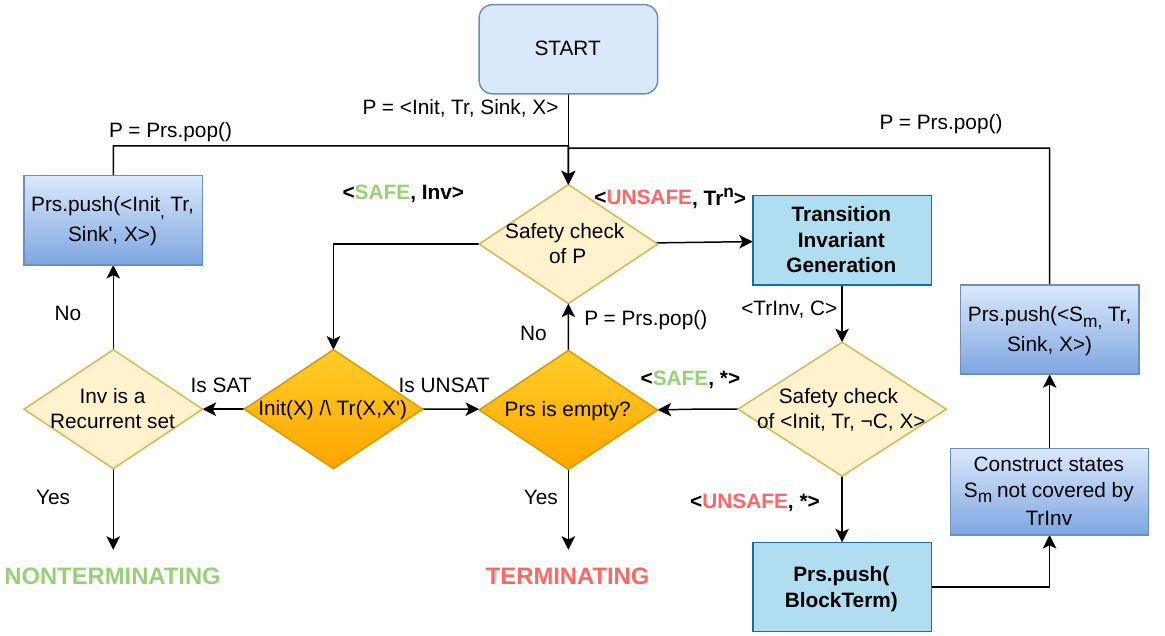}
    \caption{Extended Interpolation-based Termination Analysis (new components are highlighted in dark orange and dark blue colors).}
    \label{fig:SNA++}
\end{figure}

\subsection{Overall Algorithm}

% Mention that A and B can be used for Algo 1
% Make people understand that C is much more important!!! Example to motivate each section!!!
% The optimized version of the algorithm is similar to the Algorithm~\ref{Alg:SNA_Term}.
% Main differences are the refinement of the safety problem, and the check for trivial termination.
% These changes allow to significantly improve the overall power and performance of the algorithm.
% In the description, we would concentrate particularly on the changes in the flow, as the rest of the algorithm is quite similar.
% Description in the line10 - maybe explicit else - continue.

\begin{algorithm}[t!]
	\SetInd{0.5em}{0.5em}
    \DontPrintSemicolon
    \small
    \newcommand{\Continue}{\textbf{continue}}
    \caption{Extended Interpolation-based Transition Invariant Generation for Proving and Disproving Termination.}
    \label{Alg:SNA_Refined}
    \SetKwInOut{Input}{Input}
    \SetKwInOut{Output}{Output}
    \SetKwInOut{Data}{Data}
    \Input{Safety problem $P$}
    \Output{\TERM $\mid$ \NONTERM}
    \Data{$\TrInv$ - candidate transition invariant, $\Prs$ - stack of safety problems}
    $\TrInv \gets \bot$ \\
    $\Prs \gets [P]$ \label{SNA_Refined:Prs} \\
    \While{$\top$} {
        \label{SNA_Refined:Extension} $P' \equiv \tuple{\Init, \Tr, \Sink, X} \gets \Prs.pop()$ \\
        \label{SNA_Refined:Safety} $\tuple{\res, \phi} \gets \safetyCheck(P')$ \\
        \eIf{$\res = \UNSAFE$} {
            $\tuple{\TrInv, C} \gets \trInvSynthesis(P', \phi, \TrInv)$ \\
            \label{SNA_Refined:CheckNc} $\tuple{\res', \phi'} \gets \safetyCheck(\tuple{\Init, \Tr, \lnot C, X})$ \\
            \eIf{$\tuple{\res', \phi'} = \tuple{\UNSAFE, \Tr^{m}}$} 
            {
                $\Prs.push(\blockTerm(P', \phi))$ \label{SNA_Refined:BlockTerm}\\
                $S_m(X) \gets QE(\exists X^{(0)}, \dots, X^{(m-1)}. \Init(X^{(0)}) \land \phi'(X^{(0)}, \dots, X^{(m)}) \land \lnot C(X^{(m)}))  [X^{(m)} \mapsto X]$ \label{SNA_Refined:Sm}\\
                $\Prs.push(\tuple{S_m, \Tr, C, X})$\label{SNA_Refined:PushNewProblem}
            }{
                \lIf {$\Prs = \emptyset$} {\Return $\TERM$\label{SNA_Refined:NEWTERM}}
                \textbf{else} \Continue
            } 
        }{
        %    \label{SNA_TERM:SAFE}
            \uIf{$\forall X. \phi(X) \rightarrow \exists X'. \Tr(X,X') \land \phi(X')$} {
                \Return $\NONTERM$ \label{SNA_Refined:RECURRENT}
            } 
            \uIf(\label{SNA_Refined:TERM}) {\UNSAT ? $\Init(X) \land \Tr(X,X')$} {
                    \lIf{$\Prs = \emptyset$} {\Return $\TERM$ }
                    \textbf{else} \Continue
            } {
                $\Sink'(X) \gets \lnot QE(\exists X'. \Tr(X,X'))$ \label{SNA_Refined:Refinement} \\ 
                $\Prs.push(\tuple{\Init, \Tr, \Sink', X})$ \label{SNA_Refined:NewPush}
            }
        }
        
    }
\end{algorithm}

Algorithm~\ref{Alg:SNA_Refined} extends Algorithm~\ref{Alg:SNA_Term} to enable more efficient (non)termination analysis.
First, in addition to maintaining the transition invariant,
the algorithm maintains a stack of safety problems to be analyzed: $\Prs$ (line~\ref{SNA_Refined:Prs}).
At each iteration, the algorithm pops a safety problem from the stack and analyzes it.
The stack allows the algorithm to focus on (non)termination analysis of particular reachable states within 
$\TS$.
The original safety problem always remains in the $\Prs$ stack.

The postprocessing phase of transition invariant construction is the main change to Algorithm~\ref{Alg:SNA_Term}.
When $\lnot C$ is reachable, the extended algorithm performs three actions: (1) it pushes the $\blockTerm$-refined problem 
(line~\ref{SNA_Refined:BlockTerm}); (2) it computes a formula encoding reachable states,
for which transition invariant candidate is not valid (line~\ref{SNA_Refined:Sm}); and (3) it pushes a new 
subproblem to analyze for termination (line~\ref{SNA_Refined:PushNewProblem}), with $S_m$ as the initial states. 
This allows the algorithm to concentrate on proving termination or nontermination of particular reachable states.

If $\lnot C$ is not reachable, $\TrInv$ proves termination of the currently analyzed $\TS$ (the last $\TS$ 
popped from the stack), and the number of remaining subproblems decreases.
If the stack is empty, Algorithm~\ref{Alg:SNA_Refined} returns \TERM.
The algorithm incorporates two further extensions.

\textit{Simple termination check (line~\ref{SNA_Refined:TERM}).} 
When the safety check returns $\SAFE$ and $\phi(X)$ is not a recurrent set, 
the algorithm additionally checks whether it is possible to take a transition 
from the initial states by querying:
$$\Init(X) \land \Tr(X,X')$$
If this formula is unsatisfiable, then no transition can be taken from the initial states 
(or all transitions are blocked, meaning all reachable states deterministically lead to termination), 
and the system is trivially terminating.

\textit{Refinement of $\Sink$ states (lines~\ref{SNA_Refined:Refinement},~\ref{SNA_Refined:NewPush}).} 
When $\phi(X)$ is not a recurrent set, this is often because \blockTerm\ 
has eliminated all transitions leading to $\Sink$.
This means that the produced $\phi$ contains sink states due to the updated $\Tr$. 
New formula encoding sink states is computed as:

$$\Sink'(X) = \lnot QE(\exists X'. \Tr'(X,X'))$$

Formula $\Sink'$ replaces $\Sink$ in the safety problem, enabling the analysis to 
proceed instead of returning \UNKNOWN. 

\begin{lemma}\label{Lem:FinalTermination}
    If Algorithm~\ref{Alg:SNA_Refined} returns $\TERM$, then the original transition system $\tuple{\Init, \Tr, X}$ is terminating.
\end{lemma}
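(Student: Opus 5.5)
We argue by an invariant over the stack $\Prs$: every problem on the stack, and the problem currently being processed, is such that if every problem currently in $\Prs$ is terminating then the original $\tuple{\Init, \Tr, X}$ is terminating. More precisely, we show by induction on the iterations of Algorithm~\ref{Alg:SNA_Refined} the following property $\mathcal{I}$: \emph{if all transition systems $\tuple{\Init_j, \Tr_j, X}$ underlying the problems $P_j$ in $\Prs$ (together with the popped $P'$, before it is discharged) are terminating, then the original system is terminating}. Initially $\Prs = [P]$, so $\mathcal{I}$ holds trivially. \TERM\ is returned only when the stack is empty and the current problem $P'$ has been discharged as terminating; then $\mathcal{I}$ yields termination of the original system.

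The inductive step is a case split on the branches. (a) If the safety check is \UNSAFE\ and $\lnot C$ is unreachable, then by the construction of $C$ in \trInvSynthesis\ (coverage formula, with $\TrInv$ disjunctively well-founded since only ranked disjuncts are added) Lemma~\ref{Lem:Term} applies to $\tuple{\Init, \Tr, X}$ of $P'$, so $P'$ is terminating and may be dropped. (b) If $\lnot C$ is reachable, $P'$ is replaced by $\blockTerm(P', \phi)$ and $\tuple{S_m, \Tr, C, X}$. The argument of Lemma~\ref{Lem:TERM_N} shows that termination of the \blockTerm-refined system implies termination of $P'$'s system, because blocked states deterministically reach $\Sink$. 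The new subproblem only adds an obligation, so $\mathcal{I}$ is preserved. We do not need $S_m$'s problem for soundness, because pushing extra problems only strengthens the hypothesis of $\mathcal{I}$. (c) In the \SAFE\ branch with $\Init \land \Tr$ unsatisfiable, $P'$ admits no transitions from its initial states and is trivially terminating, so it is dropped. (d) In the \SAFE\ branch with the $\Sink$ refinement, the pushed problem has the same $\Init$ and $\Tr$ as $P'$, so termination is literally the same property.

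One point needs care. $\TrInv$ is shared across subproblems, and in case (a) Lemma~\ref{Lem:Term} is applied to the current problem's $\Tr$. This $\Tr$ may be a \blockTerm-refined relation or belong to a subproblem with initial states $S_m$. Lemma~\ref{Lem:Term} needs only that $\TrInv$ is a finite disjunction of well-founded relations and that $C$ is a coverage formula for that very $\Tr$. Line~\ref{SNA_Term:Synth} computes $C$ from the $\Tr$ of $P'$, so both conditions hold regardless of where the disjuncts originated.

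The main obstacle is case (b): formalizing that \blockTerm\ preserves termination in the reverse direction. We would try first to reuse the argument of Lemma~\ref{Lem:TERM_N}. Any infinite run of the unrefined system either stays within $\Tr'$ and the refined $\Init'$, giving an infinite run of the refined system, or at some point enters a blocked set $S_{i+1}$ or $\lnot\Init'$. Every state in such a set deterministically reaches $\Sink$ within finitely many steps, because the SAT check at line~\ref{SNA:BLOCK} failed for the later positions of the trace. This contradicts infinitude.
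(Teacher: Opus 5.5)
\textbf{There is a gap in the inductive step.} Your invariant $\mathcal{I}$ is true, but the argument you give for preserving it fails on the subproblems pushed at line~\ref{SNA_Refined:PushNewProblem}. These have the form $\tuple{S_m, \Tr, C, X}$. Their third component is the parent's coverage formula $C$, not a set of sink states, yet the safety check, \textsc{TrInvGen} and \textsc{BlockTerm} all use it in the role of $\Sink$. Your cases (a) and (b) tacitly assume a genuine $\Sink$.

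In (a), \textsc{TrInvGen} returns $C_{\mathit{new}} = C_{\mathit{old}} \lor \lnot\QE(\ldots)$. The disjunct $C_{\mathit{old}}$ satisfies $C(X)\land\Tr(X,X')\land\TrInv(X',X'')\rightarrow\TrInv(X,X'')$ only for the $\TrInv$ it was computed from. A freshly added ranked disjunct $D$ with $D(X',X'')$ does not force $\TrInv(X,X'')$. So $C_{\mathit{new}}$ need not be a coverage formula, and Lemma~\ref{Lem:Term} does not apply. Your remark that $C$ is computed from the $\Tr$ of $P'$ overlooks exactly this $\Sink$-disjunct.

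In (b), \textsc{BlockTerm} on such a subproblem removes states that deterministically reach $C_{\mathit{old}}$, and that says nothing about termination. For instance, let $\TrInv \triangleq x>0 \land x'<x$, and let the only transitions out of $x=0$ and $x=1$ be $0\to1$ and $1\to0$. State $1$ satisfies both coverage conditions, while $0$ does not. For a subproblem with initial states $x=0$, the trace $0\to1$ reaches $C$ deterministically. \textsc{BlockTerm} therefore empties the initial states and yields a trivially terminating refined problem, although $0\to1\to0\to\cdots$ is infinite. So ``every dropped problem is terminating'' is false in general. Pushing subproblems does only strengthen the hypothesis of $\mathcal{I}$, as you note, but popping and dropping them needs exactly the justification that fails here.

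\textbf{The missing idea} is the structural fact the paper's proof rests on, namely that the original problem is preserved in the stack. The bottom entry of $\Prs$ is always the original $P$, modified only by \textsc{BlockTerm} with the genuine $\Sink$ or by the $\Sink'$ refinement (genuine sink states of the refined $\Tr$). Every subproblem is pushed above it. Hence $\Prs=\emptyset$ at a \textsc{TERM} return means the problem just discharged is this bottom problem.

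For that problem, $C = \Sink \lor \lnot\QE(\ldots)$ is a genuine coverage formula for the accumulated $\TrInv$, which stays disjunctively well-founded whatever the origin of its disjuncts. So Lemma~\ref{Lem:Term} settles line~\ref{SNA_Refined:NEWTERM}, and unsatisfiability of $\Init\land\Tr$ settles line~\ref{SNA_Refined:TERM}. Your \textsc{BlockTerm}-lifting argument then carries termination back to the original system. Subproblems need no soundness argument at all; they only enlarge $\TrInv$. Likewise, $\mathcal{I}$ survives dropping a subproblem simply because the bottom problem is still on the stack.
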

\begin{proof}
    \TERM is returned at line~\ref{SNA_Refined:NEWTERM} or line~\ref{SNA_Refined:TERM}.
    As in Algorithm~\ref{Alg:SNA_Term}, to return $\TERM$, the termination conditions need to 
    be satisfied for the transition system $\tuple{\Init', \Tr', X}$, where $\Init'(X) = \Init(X) \land \bigwedge_i \lnot S_i(X)$, 
    and $\Tr'(X,X') = \Tr(X,X') \land \bigwedge_j \lnot S_j(X')$.
    $\bigvee_{i=0}^m S_i(X)$ and $\bigvee_{j=0}^l S_j(X')$ are the states that deterministically lead to termination.

    \textbf{Case 1.} If $\TERM$ is returned at line~\ref{SNA_Refined:NEWTERM}, then the termination proof 
    follows directly from Lemma~\ref{Lem:TERM_N}. This is the case, since the original safety problem is
    preserved in the stack, so if $\Prs = \emptyset$, the original safety problem is terminating.

    \textbf{Case 2.} If $\TERM$ is returned at line~\ref{SNA_Refined:TERM}, then formula 
    $\Init'(X) \land \Tr'(X,X')$ is unsatisfiable, so no transition is reachable from $\Init'$.
    Thus, all feasible transitions in the original 
    transition system lead to deterministically terminating states, and the transition system is terminating.
\end{proof}

\begin{lemma}\label{Lem:FinalNontermination}
     If Algorithm~\ref{Alg:SNA_Refined} returns $\NONTERM$, then the original transition system $\tuple{\Init, \Tr, X}$ is nonterminating.
\end{lemma}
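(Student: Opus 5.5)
The plan is to trace where \NONTERM is returned and reduce to Lemma~\ref{Lem:NONTERM}. \NONTERM is returned only at line~\ref{SNA_Refined:RECURRENT}. At that point the current problem $P' = \tuple{\Init', \Tr', \Sink', X}$ was popped from $\Prs$, and \safetyCheck returned \SAFE with an inductive invariant $\phi$. We also know that $\forall X.\ \phi(X) \rightarrow \exists X'.\ \Tr'(X,X') \land \phi(X')$. Since $\phi$ is an inductive invariant, $\Init'(X) \rightarrow \phi(X)$.

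First I will establish nonemptiness of $\Init' \land \phi$. This needs a small argument, or else \blockTerm/\safetyCheck must be assumed to keep $\Init'$ satisfiable. If $\Init'$ is empty, the claim is vacuous for $P'$, so I will note that a satisfiable $\Init'$ is needed. Under that assumption, $\phi$ is a recurrent set for $\tuple{\Init', \Tr', X}$. Hence $\tuple{\Init', \Tr', X}$ has an infinite run starting in $\Init'$.

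Second, I will prove an invariant over the stack by induction on the iterations of the main loop. The invariant is: for every problem $\tuple{\Init', \Tr', \Sink', X}$ in $\Prs$,
\begin{itemize}
\item[(a)] $\Tr' \rightarrow \Tr$, where $\Tr$ is the original relation;
\item[(b)] every state of $\Init'$ is reachable in the original system from some original initial state.
\end{itemize}

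The inductive cases are as follows.
\begin{itemize}
\item The initial push of $P$ is trivial.
\item \blockTerm (line~\ref{SNA_Refined:BlockTerm}) only conjoins constraints to $\Tr'$ or $\Init'$, so (a) and (b) are preserved.
\item The push at line~\ref{SNA_Refined:NewPush} changes only $\Sink$.
\item At line~\ref{SNA_Refined:PushNewProblem}, the new initial states are $S_m$. By construction, $S_m$ is the projection of $\Init'(X^{(0)}) \land \phi'(X^{(0)},\dots,X^{(m)})$ onto $X^{(m)}$, where $\phi'$ is a trace of $\Tr'$. Such states are reachable from $\Init'$ via $\Tr' \subseteq \Tr$, hence reachable in the original system by (b). The transition relation pushed is $\Tr'$, so (a) holds.
\end{itemize}

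Finally I combine the two steps. Take $s_0 \in \Init'$ on the infinite $\Tr'$-run. By (b), there is a finite $\Tr$-path from an original initial state to $s_0$. By (a), the infinite $\Tr'$-run is also a $\Tr$-run. Concatenating the two gives an infinite execution of the original system from an initial state, so the system is nonterminating.

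The main obstacle is the $S_m$ case of the stack invariant. The paper is informal about whether \QE\ preserves exactness there, so I will argue directly from the existential definition of $S_m$. I will also need to justify the satisfiability of $\Init'$; I would try to show it from the fact that \safetyCheck returned \UNSAFE on the trace that produced it.
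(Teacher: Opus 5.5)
Your route is the paper's. Its proof also says that $\phi$ is a recurrent set of the popped subproblem $\tuple{\Init',\Tr',X}$, and then transfers it to the original system using $\Tr'\rightarrow\Tr$ and the reachability of $\Init'$ from $\Init$. Your stack invariant (a)/(b), checked push by push, together with the concatenation of a finite prefix and the infinite $\Tr'$-run, is a correct and more careful version of that one-line justification. For the $S_m$ case it is enough that \QE{} does not over-approximate the projection.

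The step that does not go through is nonemptiness of $\Init'$, and your plan to derive it from the earlier \UNSAFE{} answer fails. That answer only shows $\Init'$ was nonempty \emph{before} \blockTerm. Line~\ref{SNA:BLOCKINIT} of Algorithm~\ref{Alg:SNABlock} then removes every initial state that can reach $\Sink$ in $n$ steps, and that can be all of them.

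For example, take $\Init \equiv x=1$, $\Tr \equiv x>0 \land (x'=x-1 \lor x'=x+1)$, and the length-$1$ counterexample. The loop of \blockTerm is skipped and the result is $\Init'\equiv\bot$. This system is nonterminating, so $\lnot C$ must be reachable (otherwise Lemma~\ref{Lem:Term} would prove it terminating). Hence this empty problem is pushed at line~\ref{SNA_Refined:BlockTerm}. It stays on $\Prs$ and can be popped once the subproblems above it are discharged. At that point \safetyCheck may legitimately return $\phi\equiv\bot$, and the test at line~\ref{SNA_Refined:RECURRENT} holds vacuously.

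So an empty $\Init'$ does not make the claim ``vacuous'', as you write. It violates condition~(2) of the recurrent-set definition, and then the argument says nothing about the original system, which may well be terminating. Nonemptiness must either be added as a hypothesis or enforced by the algorithm. For instance, check satisfiability of $\Init'$ before returning; since $\Init'\rightarrow\phi$, this gives $\Init'\land\phi\not\equiv\bot$.

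The paper's proof has the same hole, unstated: it calls $\phi$ a recurrent set even though condition~(2) is never checked. With that extra hypothesis, your proof is complete.
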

\begin{proof}
    \NONTERM is returned at line~\ref{SNA_Refined:RECURRENT}, where 
    $\phi(X)$ is a recurrent set for the currently considered $\TS$  $\tuple{\Init', \Tr', X}$. 
    By Lemma~\ref{Lem:NONTERM}, there exists a recurrent set in the transition system $\tuple{\Init', \Tr', X}$, and since 
    $\Tr'(X,X') \rightarrow \Tr(X,X')$ and $\Init'$ is reachable from  $\Init$ (or it is $\Init$), this recurrent set witnesses nontermination of $\TS$.
\end{proof}

\section{Implementation and Evaluation}
\label{Sec:Eval}

\newcommand{\golem}{\textsc{Golem}\xspace}
\newcommand{\opensmt}{\textsc{OpenSMT}\xspace}
\newcommand{\sna}{\textsc{SNA}\xspace}
\newcommand{\snaTerm}{\textsc{ItpTig}\xspace}
\newcommand{\snaOpt}{\textsc{ItpTig+}\xspace}
\newcommand{\LoAT}{\textsc{LoAT}\xspace}
\newcommand{\KoAT}{\textsc{KoAT}\xspace}
\newcommand{\Tt}{\textsc{T2}\xspace}

This section presents the implementation details and experimental 
evaluation of our approach.

\subsection{Implementation}

We implemented the prototypes of Algorithms~\ref{Alg:SNA}, \ref{Alg:SNA_Term}, and \ref{Alg:SNA_Refined} inside of \golem; 
we refer to these implementations as \sna, \snaTerm, and \snaOpt.
\golem was chosen because it supports multiple different safety verification procedures,
such as Property-Directed Reachability~\cite{DBLP:journals/fac/BradleyM08} (PDR), Transition Power Abstraction~\cite{DBLP:conf/tacas/BlichaFHS22} (TPA), and Interpolation-based Model Checking~\cite{DBLP:conf/cav/McMillan03}.
Additionally, we used quantifier elimination, already implemented in \golem.
% \snaOpt with different safety verification engines was able to solve different 
% termination problems uniquely.
% Quantifier Elimination is produced by executing Model Based Projections, untill the whole formula is covere
%
For SMT queries and interpolation, our implementation uses the \opensmt  solver~\cite{DBLP:conf/sat/HyvarinenMAS16,DBLP:conf/tacas/BruttomessoPST10}.
\opensmt is integrated within \golem and provides native support for interpolation.

\subsection{Evaluation}

We compared \snaOpt against \sna and \snaTerm to evaluate the effect of combining termination and nontermination analyses.
Additionally, we evaluated \snaOpt against several state-of-the-art tools: 
\LoAT~\cite{10.1007/978-3-031-90660-2_13} (version eacb74b), \KoAT~\cite{10.1007/978-3-031-90660-2_13} 
(version 41cb681), \Tt~\cite{DBLP:conf/tacas/BrockschmidtCIK16} (version 10f1373). 
The evaluation focused on termination analysis of Integer Transition Systems. 
Experiments were conducted on Ubuntu 20.04 machine with an AMD EPYC 7452 32-core processor and 8x32 GiB RAM. 
Our evaluation addressed the following research questions:
\begin{itemize}
\item \textbf{RQ1:} How does extending \sna with the termination procedure affect verification performance?
\item \textbf{RQ2:} How does \snaOpt perform compared to state-of-the-art tools?
\end{itemize}

The benchmark suite of Integer Transition Systems was taken from the Termination Competition~\cite{DBLP:conf/cade/GieslMRTW15}. 
Overall, there are 1222 benchmarks (538 nonterminating, 665 terminating, and 19 unknown). 
We excluded 42 trivially terminating benchmarks from the evaluation (termination of which can be detected by trivial syntactic checks), which
are solved by all tools.
The evaluation was executed with the 120 seconds timeout (the increase of the timeout does not 
significantly influence the evaluation).

\begin{table}[t!]
    \centering
    \caption{Comparison of \snaOpt with \sna and \snaTerm.}
    \label{tab:internal}
    \begin{tabular}{|c|c|c|c|c|}
        \hline
        \textbf{Algorithm} & \textbf{Total} & \textbf{Terminating} & \textbf{Nonterminating} & \textbf{Uniquely} \\
        \hline
        \sna & 343 & 0 & 343 & 7 \\
        \hline
        \snaTerm & 521 & 186 & 335 & 7 \\
        \hline
        \snaOpt & 761 & 320 & 441 & 240 \\
        \hline
    \end{tabular}
\end{table}
% \begin{figure}[t]
%     \centering
%     \includegraphics[width=1\linewidth]{Figures/TermCompCMP.pdf}
%     \caption{Comparison of \snaOpt with \sna and \snaTerm. The left plot is solved terminating instances, the right is nonterminating.}
%     \label{fig:golemComp}
% \end{figure}

Table~\ref{tab:internal} gives the experimental results for \snaOpt, \sna, and \snaTerm. 
It confirms that \snaOpt consistently outperforms both \sna and \snaTerm for terminating 
and nonterminating instances, solving 240 instances solved by neither \sna nor \snaTerm. 
We attribute this to \snaOpt's generation of focused sub-problems, 
which allows the analysis to concentrate on specific reachable states.
This benefits recurrent set detection by focusing the analysis on specific states not covered by the transition invariant.
It also benefits termination proving by extending the transition invariant with additional disjuncts 
that cover different regions of the reachable state space.
In answer to \textbf{RQ1}, the combination of termination and nontermination reasoning in 
\snaOpt significantly outperforms either individual approach in isolation.

Interestingly, \snaOpt behaves differently depending on the underlying safety verification engine:
with TPA it solved 761 instances and
with PDR -- 755.
That said, it solved 24 (resp. 18) benchmarks uniquely with TPA (resp. PDR).

\begin{figure*}[t!]
    \centering
    \includegraphics[width=1\linewidth]{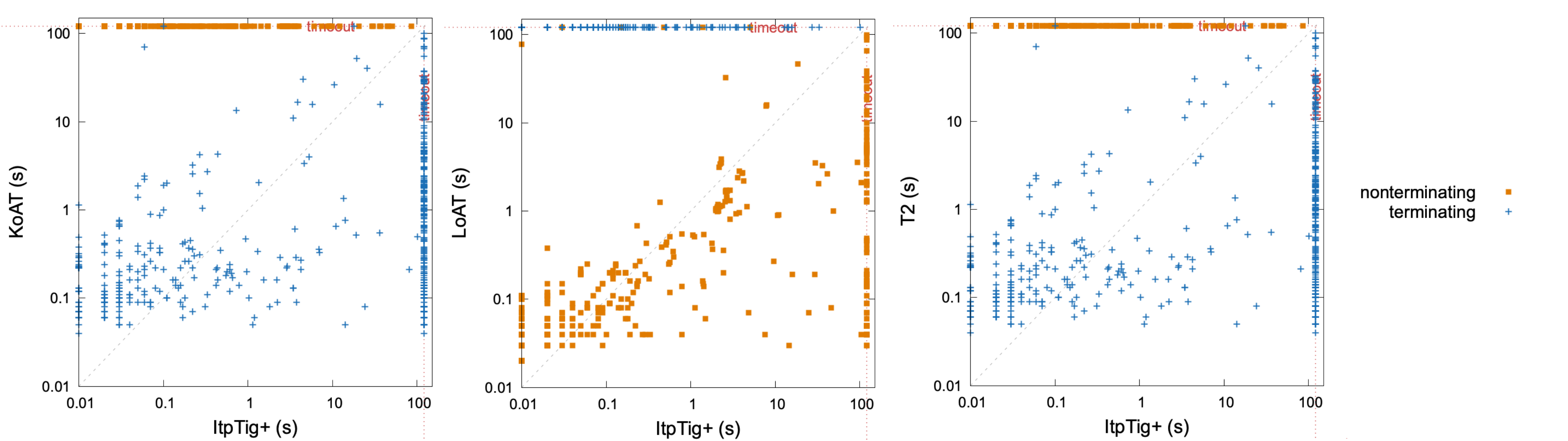}
    \caption{Comparison of \snaOpt with \KoAT, \LoAT and \Tt (runtime in seconds).}
    \label{fig:stateOfArtComp}
\end{figure*}

\begin{table}[t!]
    \centering
        \caption{Comparison of \snaOpt with state-of-the-art tools.}
    \label{tab:external}
    \begin{tabular}{|c|c|c|c|c|}
        \hline
        \textbf{Algorithm} & \textbf{Total} & \textbf{Terminating} & \textbf{Nonterminating} & \textbf{Uniquely} \\
        \hline
        \LoAT & 525 & 0 & 525 & 48 \\
        \hline
        \KoAT & 562 & 562 & 0 & 26 \\
        \hline
        \Tt & 1002 & 572 & 430 & 40 \\
        \hline
        \snaOpt & 761 & 320 & 441 & 8 \\
        \hline
    \end{tabular}
\end{table}

Figure~\ref{fig:stateOfArtComp} shows a scatter plot comparing \snaOpt against 
state-of-the-art termination analyzers, where each axis represents the solving time in seconds for a given tool.
Orange squares denote nonterminating instances; blue crosses denote terminating instances.
Points above the diagonal indicate instances where \snaOpt is faster than the competitor.
\snaOpt solves 761 benchmarks, with \Tt solving 1002, \KoAT 562, and \LoAT 525 instances, the full comparison can be seen in Table~\ref{tab:external}.
As shown in Figure~\ref{fig:stateOfArtComp}, \snaOpt, despite being a prototype implementation, is competitive with state-of-the-art termination analyzers.
Among all compared tools, \snaOpt uniquely solves eight instances.
Notably, two of these — \texttt{juLinkedListCreateAddAll.jar-obl-11} and \texttt{juLinkedListCreateAddAllAt.jar-obl-17} have never been solved in the Termination Competition.
\snaOpt found recurrent sets and proved nontermination for them.
In answer to \textbf{RQ2}, \snaOpt is competitive with state-of-the-art tools and solves 
instances not solved by any tool in the Termination Competition.

\section{Related Work}
\label{Sec:Related}
Recent years have seen significant advances in automated techniques for both termination and nontermination analysis. 
%Below we discuss the techniques related to our approach.

\textbf{Nontermination.} 
Nontermination analysis is typically centered on the construction 
of recurrent sets — sets of states from which execution cannot 
escape~\cite{DBLP:conf/cade/FrohnG22,DBLP:conf/tacas/BrockschmidtCIK16,
DBLP:conf/tacas/ChenCFNO14,DBLP:conf/foveoos/BrockschmidtSOG11,
DBLP:conf/cav/LarrazNORR14,FedyukovichZG18}.
These techniques range from template-based synthesis to 
safety-check-based recurrent set construction, the latter of which 
our approach directly extends~\cite{DBLP:conf/tacas/ChenCFNO14}.
An alternative approach detects nontermination by identifying 
lasso-shaped executions — finite paths leading to a cycle — via 
bounded model checking~\cite{DBLP:journals/entcs/BiereAS02}.
Leike and Heizmann~\cite{DBLP:conf/tacas/LeikeH18} use 
geometric nontermination witnesses: finite representations of 
infinite executions via sums of geometric series.
Unlike the approaches above, which treat nontermination in isolation, 
our technique is integrated with termination analysis within a unified 
framework, so that intermediate results guide one another.

%  within a unified framework
% Nontermination analysis is typically centered around the construction of recurrent sets — sets of states from which execution cannot 
% escape~\cite{DBLP:conf/sas/Ben-AmramDG19,DBLP:conf/cade/FrohnG22,DBLP:conf/tacas/BrockschmidtCIK16,DBLP:conf/tacas/ChenCFNO14,DBLP:conf/foveoos/BrockschmidtSOG11,DBLP:conf/cav/LarrazNORR14}.
% These techniques can vary from the template-based synthesis to the safety-check based recurrent set construction.
% Another traditional approach is safety-based nontermination analysis~\cite{DBLP:journals/entcs/BiereAS02},
% that detects nontermination if the same state is reachable within TS more then once.
% Another interesting approach is the construction of geometric nontermination argument~\cite{DBLP:conf/tacas/LeikeH18} -
% a finite representation of an infinite execution that via a sum of geometric series.

\textbf{Termination.} 
The most widely studied approach to termination analysis is the 
synthesis of ranking functions~\cite{DBLP:conf/vmcai/PodelskiR04,
DBLP:conf/cav/KuraUH20,DBLP:journals/jacm/Ben-AmramG14,
DBLP:journals/jar/GieslABEFFHOPSS17,DBLP:conf/esop/SaritaSGSV26,
DBLP:conf/tacas/UrbanGK16,FedyukovichZG18,RileyF25}.
State-of-the-art techniques focus on the construction of 
multiphase or lexicographic ranking 
functions~\cite{DBLP:journals/jacm/Ben-AmramG14,FedyukovichZG18}, which 
significantly broaden the class of programs for which termination 
can be proved. However, the complexity of a transition system can 
make ranking function synthesis infeasible.
With fixpoint computation, ~\cite{DBLP:journals/pacmpl/UnnoTGK23} reduces termination and nontermination to the validity of a first-order fixpoint formula.
Disjunctively well-founded transition 
invariants~\cite{DBLP:conf/lics/PodelskiR04,DBLP:conf/sas/CookPR05,
DBLP:conf/cav/KroeningSTW10,DBLP:conf/tacas/TsitovichSWK11} enable over-approximating the behavior of the transition 
system. % reducing the termination proof to a disjunctive 
%well-foundedness check on the invariant. 
The central challenge there is the generation of the 
transition invariant itself: existing techniques rely either on 
template-based generation~\cite{DBLP:conf/tacas/TsitovichSWK11} or on the direct construction of a 
ranking relation from a particular trace~\cite{DBLP:conf/sas/CookPR05}. 
In contrast, our approach 
uses Craig interpolation to guide transition invariant synthesis 
with respect to the actual termination conditions, producing 
property-directed invariants.
%  from terminating traces. 

% There also exists another set of techniques that are based on the construction of well-founded transition
% invariants~\cite{DBLP:conf/lics/PodelskiR04,DBLP:conf/sas/CookPR05,DBLP:conf/cav/KroeningSTW10,DBLP:conf/tacas/TsitovichSWK11}.
% These approaches overapproximate the behaviour of the transition system, simplifying the termination proof.
% The core problem of these approaches is the generation of transition invariant.
% Listed techniques rely either on the template-based generation or on the direct generation of ranking relation based on a particular trace
% in the transition system.
% In contrast, we propose to do the guided generation of transition invariants, based on the terminating traces and the actual 
% termination conditions.
% Additionally, our technique is combined with a nontermination approach, increasing the efficiency of both techniques.

%

\section{Conclusion}
\label{Sec:Conclusion}

% This paper introduces Interpolation-based Transition Invariant Generation for Proving and Disproving Termination.
% A novel approach for the termination analysis that extends Safety-based Nontermination Analysis technique.
% Our key insight is to use interpolation for the construction of termination witnesses,
% that allows the transition invariant to be guided by termination condition.
% Additionally, our technique is strongly centered on the interaction between termination and nontermination
% approaches, using transition invariants to limit the explored state space and using safety-based 
% nontermination to generate candidate transition invariants.
% Application of safety verifiers provides a strong background for the analysis, allowing to 
% quickly check termination and nontermination.
% We proved the soundness of the algorihtm and conducted an extensive experimental evaluation, 
% which demonstrates both the eficiency of the combination of termination and nontermination techniques,
% and the competitive  performance compared to other state-of-the-art termination analysis tools.

We have presented an interpolation-based approach to transition 
invariant generation for proving and disproving termination of 
infinite-state systems. 
The approach extends safety-based nontermination analysis with a termination proving capability, 
unifying the two within a single framework. 
Our key insight is to exploit terminating traces produced by the SNA
as a source of structural information, using Craig interpolation 
to construct well-founded transition invariants guided by the 
termination conditions. 
The interaction between termination and nontermination reasoning is central to our technique: 
transition invariants focus the nontermination analysis on not covered states, 
while SNA provides terminating traces that drive invariant generation.
Grounding the approach in safety verification provides a robust foundation, enabling 
efficient checking of both termination and nontermination via 
existing safety solvers. 
We proved the soundness of the algorithm and conducted an extensive
experimental evaluation. 
The results demonstrate the benefit of combining termination and nontermination analysis, 
and show that our approach achieves competitive performance against state-of-the-art tools, 
including the unique solution of instances never solved by any 
tool in the Termination Competition.

\textbf{Acknowledgements.} This work was supported in part by SNSF grant No. 200021-236601.
We thank Marek Jankola and Dirk Beyer for fruitful discussions 
on termination analysis. 

\bibliographystyle{IEEEtran}
\bibliography{biblio}

\end{document}